\newtheorem{theorem}{Theorem}[section]
\newtheorem{proposition}[theorem]{Proposition}
\newtheorem{conjecture}[theorem]{Conjecture}
\newcommand{\secref}[1]{Section~\ref{#1}\xspace}
\newcommand{\thmref}[1]{Theorem~\ref{#1}\xspace}
\newcommand{\propref}[1]{Proposition~\ref{#1}\xspace}
\newcommand{\conjref}[1]{Conjecture~\ref{#1}\xspace}
\newcommand{\figref}[1]{Figure~\ref{#1}\xspace}
\newcommand{\rt}{\right}
\newcommand{\lt}{\left}
\newcommand{\Prob}{\mathbb{P}}
\newcommand{\E}{\mathbb{E}}
\newcommand{\eps}{\varepsilon}
\newcommand{\mc}[1]{\mathcal #1}
\newcommand{\qd}{qd}
\newcommand{\lvs}{\mathcal L}
\newcommand{\ovl}{\overline}
\newcommand{\tr}{\textup{tr}}
\newcommand*\circled[1]{\tikz[baseline=(char.base)]{
    \node[shape=circle,draw,inner sep=2pt] (char) {#1};}}
\title{On the maximum quartet distance between phylogenetic trees}
\author{
Noga Alon
\thanks{Sackler School of Mathematics and Blavatnik School of
Computer Science, Tel Aviv University, Tel Aviv 69978, Israel and
School
of Mathematics, Institute for Advanced Study, Princeton, NJ 08540.
Email: \texttt{nogaa@tau.ac.il}.
Research supported in part by a USA-Israeli BSF
grant,
by an ISF grant, by the Israeli I-Core program and by the Oswald
Veblen
Fund.}
\and
Humberto Naves
\thanks{Institute for Mathematics and its Applications,
University of Minnesota, Minneapolis, MN 55455, USA.
Email: \texttt{hnaves@ima.umn.edu}.
This research was supported in part by the Institute for
Mathematics and its Applications with funds provided by
the National Science Foundation.} \and
Benny Sudakov
\thanks{Department of Mathematics, ETH, 8092 Zurich.
Email: \texttt{benjamin.sudakov@math.ethz.ch}.
Research supported in part by SNSF grant 200021-149111 and by a
USA-Israeli BSF grant.}}
\date{}
\begin{document}
\maketitle
\setcounter{page}{1}

\begin{abstract}

  A conjecture of Bandelt and Dress states that the maximum quartet distance
  between any two phylogenetic trees on $n$ leaves is at most
  $(\frac 23 +o(1))\binom{n}{4}$.
  Using the machinery of flag algebras we improve the currently known bounds
  regarding this conjecture, in particular we show that the maximum is at most
  $(0.69 +o(1))\binom{n}{4}$.
  We also give further evidence that the conjecture is true by proving that
  the maximum distance between caterpillar trees is at most
  $(\frac 23 +o(1))\binom{n}{4}$.

\end{abstract}

\section{Introduction}

The practice of phylogenetic tree reconstruction to hypothesize various
aspects of evolutionary relationships among different species of organisms
has become a central problem in molecular biology.
For instance, the ``Tree of Life'' project~\cite{TreeOfLife} aims, among
other things, to accurately construct a tree representing the evolutionary
history of the organismal lineages as they change through time.

A phylogeny (the evolutionary history of a set of species) is usually
represented by a tree where the species under study are mapped to the
leaves of the tree and the tree-structure represents the different evolutionary
relationships among them. Here we focus solely on \emph{undirected}
(or \emph{unrooted}) phylogenetic trees. In this setting, the underlying
tree is not directed and each non-leaf node is incident to exactly three
edges. The basic unit of information for phylogenetic classification is the
\emph{quartet}, which is an undirected phylogenetic tree having exactly four
leaves. We denote a quartet over the leaves $\{a,b,c,d\}$ as $[ab|cd]$
whenever there is an edge in the underlying tree separating the pair
$\{a,b\}$ from the pair $\{c,d\}$, as \figref{fig:quartet} shows.
Note that a phylogenetic tree defined over a taxa (species) set of size
$n$ contains the information of exactly $\binom{n}{4}$ quartets.

\begin{figure}[H]
\centering
\begin{tikzpicture}
  [scale=1,auto=left,every node/.style={circle,draw,fill=white,inner sep=1pt}]
  \node (n1) at (-2,  1 cm) {$a$};
  \node (n2) at (-2, -1 cm) {$b$};
  \node (n3) at ( 2,  1 cm) {$c$};
  \node (n4) at ( 2, -1 cm) {$d$};
  \node[inner sep=0pt] (nleft) at (-1, 0 cm) {};
  \node[inner sep=0pt] (nright) at ( 1, 0 cm) {};

  \draw (n1) -- (nleft);
  \draw (n2) -- (nleft);
  \draw (n3) -- (nright);
  \draw (n4) -- (nright);
  \draw (nleft) -- (nright);

\end{tikzpicture}
\caption{A quartet.}
\label{fig:quartet}
\end{figure}
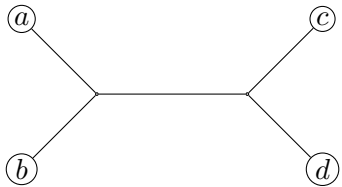

Studying quartets is of prime importance not only because they are the
smallest informational units induced by a phylogeny, but also because they
play a major role in many reconstruction methods. Among them, the
\emph{quartet-based reconstruction} is perhaps the most basic and most
studied approach (see e.g. \cite{BerryGascuel, BerJiaKeaLiWar, JiaKeaLi,
JohWarMorVaw, SnirRao, SnirYuster, StrHae}). The task of the quartet-based
reconstruction is to find a tree over the full set of species that satisfies
most of the given input quartets. In its full generality this problem is very
difficult as Steel~\cite{Steel} has shown that even deciding if there is a tree
that satisfies all the input quartets is NP-complete. To aggravate matters,
even the ideal case in which all quartets agree on a single tree is very
rare. Thus a natural problem arises, namely, finding a tree maximizing the
number of compatible quartets --- \emph{maximum quartet compatibility} (MQC)
\cite{SemSte}. As MQC is obviously NP-hard, several approximation algorithms
have been sugested. However, the best known approximation to the general
problem is still obtained by a naive ``random labelling of the leaves
of a tree'' with expected approximation ratio of $1/3$.

Related to the problem of compatibility is the concept of \emph{quartet
distance} \cite{EstMcMMea}. This notion is used to measure similarity
of two different phylogenetic trees by means of counting how many quartets
are compatible to both of them. More specifically, if $T_1$ and $T_2$ are
two phylogenetic trees on $n$ leaves, let $\qd(T_1,T_2)$ denote the difference
between $\binom{n}{4}$ and the number of quartets compatible to both
$T_1$ and $T_2$. With this definition in mind, a natural question emerges: what
is the maximum quartet distance between two phylogenetic trees on $n$ leaves?
Somewhat surprisingly the answer is strictly smaller than $\binom{n}{4}$.
Bandelt and Dress \cite{BanDre} showed that the maximum is always strictly
smaller than $\frac{14}{15}\binom{n}{4}$ for $n \ge 6$. They also conjectured
that the ratio between the maximum quartet distance and $\binom{n}{4}$
converges to $\frac 23$ as $n$ tends to infinity.

\begin{conjecture}[Bandelt and Dress]

  \label{conj:main}
  The maximum quartet distance between two phylogenetic trees on $n$
  leaves is $\lt(\frac 23 + o(1)\rt) \binom{n}{4}$.

\end{conjecture}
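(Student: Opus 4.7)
The plan is to split the bound into the matching lower and upper parts, since the conjecture asserts both the correct asymptotic constant and its tightness. For the lower bound, I would consider two independent uniformly random binary phylogenetic trees $T_1, T_2$ on the same leaf set. By the symmetry of the uniform measure under the three relabellings of any fixed $4$-tuple, the probability that $T_1$ and $T_2$ induce different quartets on a given $4$-set is exactly $\tfrac{2}{3}$, so $\E[\qd(T_1,T_2)] = \tfrac{2}{3}\binom{n}{4}$. A second-moment computation over pairs of $4$-sets (whose covariance depends only on the size of their overlap and the combinatorial structure of small random trees) should yield concentration, giving $\qd(T_1,T_2) \ge (\tfrac{2}{3} - o(1))\binom{n}{4}$ with high probability, and hence the lower bound.

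The upper bound $\qd(T_1, T_2) \le (\tfrac{2}{3} + o(1))\binom{n}{4}$ is the substantive half. My primary approach would be flag algebras on the combinatorial structure of quartet profiles: a tree on $n$ leaves is encoded by the $3$-valued coloring of $\binom{[n]}{4}$, and the admissible colorings are precisely those realized by some unrooted binary tree. Crucially, one can characterize admissibility through small forbidden sub-configurations on $5$ or $6$ leaves (the "four-point condition" and its refinements). Treating the pair $(T_1, T_2)$ as a single combinatorial object with a $9$-valued coloring of $4$-sets, I would push the semidefinite program from the $0.69$ bound already established in this paper to larger "flags" (e.g.\ configurations on $7$ or $8$ leaves) and hope that the sharp $\tfrac{2}{3}$ answer emerges once enough local constraints are included.

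In parallel I would try a direct combinatorial route, motivated by the caterpillar result the paper proves. Each internal edge $e$ of $T_1$ splits the leaves into $A_e \sqcup B_e$ and generates $\binom{|A_e|}{2}\binom{|B_e|}{2}$ quartets of the form $[ab|cd]$ with $a,b \in A_e$, $c,d \in B_e$. Comparing the edge-partitions of $T_1$ with those of $T_2$ via a double count should produce a structural inequality on agreements; one natural target is to show that for every pair of trees there is a "heavy" edge whose partition is approximately respected by the other tree, and then iterate by removing one side of this partition.

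The hard part will be closing the $0.69 \to \tfrac{2}{3}$ gap in the upper bound. Flag algebra SDPs often plateau strictly above the conjectured extremum when the extremal structures are not rigid (here there should be many near-extremal pairs, including random-versus-random and caterpillar-versus-balanced constructions), and rounding the numerical SDP to an exact certificate is typically delicate. A genuine combinatorial obstruction is that a $4$-set can disagree in many more ways than it can agree (two of three topologies versus one), so any purely local averaging argument naively gives $\tfrac{2}{3}$ as a trivial \emph{upper} bound on the number of agreements, not on disagreements. Overcoming this almost certainly requires a global structural insight — perhaps a conservation identity relating the quartet profiles of $T_1$ and $T_2$ through a common refinement tree — which is precisely what seems to be missing, and is why the conjecture has remained open.
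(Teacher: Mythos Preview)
The statement is a conjecture that the paper does not prove; there is no paper proof to compare against. The paper establishes only the weaker upper bound $0.69$ (\thmref{thm:main}) and the caterpillar case (\thmref{thm:caterpillar}), and explicitly leaves the full conjecture open.

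On the content of your proposal: the lower bound needs no second moment --- since $\E[\qd(T_1,T_2)] = \tfrac{2}{3}\binom{n}{4}$ for random trees, some pair already achieves at least this value, which is all that is required for the maximum (this is the argument sketched in \secref{sec:prelim}). For the upper bound, your primary plan of pushing the flag-algebra SDP to larger flags is exactly what the paper does, and the remark after \tblref{tbl:main} reports that even $t=9$ stays strictly above $\tfrac{2}{3}$; so ``hoping that the sharp $\tfrac{2}{3}$ emerges once enough local constraints are included'' is not supported by the available computational evidence. Your alternative combinatorial route via heavy edges and iteration is too vague to evaluate, and you yourself concede in the final paragraph that the needed global structural identity is missing. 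In short, your proposal is a reasonable research outline that correctly locates the difficulty, but it is not a proof, and the paper contains none either.
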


Alon, Snir, and Yuster~\cite{AlonSnirYuster} further improved the bounds
on the maximum quartet distance. Namely, they proved that the maximum is always
strictly larger than $\frac 23 \binom{n}{4}$ but asymptotically smaller
than $\frac{9}{10} \binom{n}{4}$. The lower bound of $\frac 23 \binom{n}{4}$ can
be again obtained by the same ``random labelling of the leaves'' argument,
thus \conjref{conj:main} implies that the average distance between two random
trees is asymptotically the same as the maximum distance.
We also remark that the problem of maximizing the quartet-distance between
trees can be rephrased as how much a compatible set of quartets can be
violated, which is the opposite of MQC.

The main contribution of this paper is the following statement, which 
we obtain using the machinery of flag algebras developed by Razborov
in \cite{Raz_flag}.

\begin{theorem}

  \label{thm:main}
  The maximum quartet distance between two phylogenetic trees on $n$
  leaves is at most $\lt(0.69 + o(1)\rt) \binom{n}{4}$.

\end{theorem}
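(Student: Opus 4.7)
The plan is to invoke Razborov's flag algebra method for the combinatorial theory whose models are pairs $(T_1,T_2)$ of unrooted binary phylogenetic trees on a common leaf set. I would encode such a model as a symmetric $4$-ary relation assigning to each $4$-subset the ordered pair of quartet topologies induced by $T_1$ and $T_2$, together with axioms ensuring that each of the two quartet systems arises from an actual tree. By the classical local characterisation of tree-induced quartet systems, this axiom set can be captured by forbidding a finite explicit list of configurations on at most six leaves, so the theory fits the standard flag-algebra framework. Writing $B$ for the element of $\mathcal{A}^\emptyset$ whose value on a large model is the density of $4$-subsets on which the two induced quartets disagree, \thmref{thm:main} is equivalent to establishing $B \le 0.69$ in $\mathcal{A}^\emptyset$.

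I would then fix a flag size $N$ large enough to yield the bound $0.69$ but small enough for the associated semidefinite program to remain tractable (realistically $N \in \{6,7,8\}$). For each partially labelled type $\sigma$ on fewer than $N$ vertices, enumerate by computer all $\sigma$-flags of size $N$ into a vector $F^\sigma=(F_1^\sigma,\dots,F_{k_\sigma}^\sigma)$. The Cauchy--Schwarz identity in the flag algebra guarantees that for any positive semidefinite matrix $M^\sigma$ one has $[\![(F^\sigma)^\top M^\sigma F^\sigma]\!]_\sigma \ge 0$ in $\mathcal{A}^\emptyset$. It therefore suffices to find matrices $M^\sigma$ (one per type) and non-negative coefficients $\lambda_H$ indexed by admissible $N$-vertex models $H$ such that
\[
0.69 - B \;=\; \sum_\sigma [\![(F^\sigma)^\top M^\sigma F^\sigma]\!]_\sigma + \sum_H \lambda_H H
\]
holds identically in $\mathcal{A}^\emptyset$. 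After expanding both sides in the basis of admissible $N$-vertex models, the feasibility of this identity becomes a semidefinite program in the $M^\sigma$.

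The concluding step is to solve the SDP numerically and then round the floating-point optimiser to a rigorous rational positive semidefinite certificate, a delicate but standard post-processing task. I expect the main obstacles to be: (i) correctly axiomatising tree-compatibility, so that the admissible flags at size $N$ are exactly the pairs of trees restricted to $N$ leaves rather than artefacts of the local encoding --- this is the conceptual heart, because the global property ``arises from a tree'' has to be captured through a small list of forbidden local configurations; (ii) choosing $N$ large enough to actually push the bound down to $0.69$, given that simpler averaging arguments already saturate at small $N$ while the SDP dimension grows rapidly in $N$; and (iii) carrying out the rigorous rounding that converts the numerical SDP output into a mathematical proof.
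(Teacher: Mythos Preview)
Your proposal is correct and is essentially the approach the paper takes: a flag-algebra/SDP computation in the theory of pairs of phylogenetic trees on a common leaf set, with the normalised quartet distance reinterpreted as the density of the unique ``disagreeing'' $4$-leaf model $cr_4$, and with the bound $0.6884$ obtained at flag size $t=8$. The one simplification the paper makes over your plan concerns your worry (i): rather than encoding tree-pairs as a $4$-ary relational structure and invoking a Colonius--Schulze-type local axiomatisation of quartet compatibility, it takes tree-pairs $(\overline{T}_1,\overline{T}_2)$ themselves as the primitive objects of the theory, with the induced substructure on a leaf set $A$ defined directly as $(\overline{T}_1|_A,\overline{T}_2|_A)$; the hereditary property is then immediate, flag enumeration at each size reduces to listing pairs of trivalent trees on a common leaf set up to joint isomorphism, and the axiomatisation issue disappears.
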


As further evidence that $\frac 23 \binom{n}{4}$ is the correct answer,
we prove the following statement which establishes \conjref{conj:main}
when restricted to \emph{caterpillar trees}. By \emph{caterpillar}
we mean a phylogenetic tree having at most two vertices which are each
adjacent to two leaves, as in \figref{fig:caterpillar}.

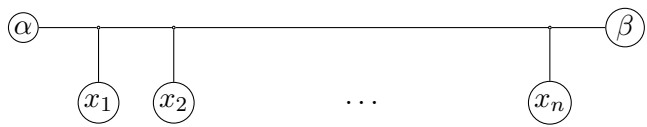
\begin{figure}[H]
\centering
\begin{tikzpicture}
  [scale=1,auto=left,every node/.style={circle,draw,fill=white,inner sep=1pt}]
  \node (nleft) at (-4,  0 cm) {$\alpha$};
  \node (n1) at (-3, -1 cm) {$x_1$};
  \node[inner sep=0pt] (m1) at (-3,  0 cm) {};
  \node (n2) at (-2, -1 cm) {$x_2$};
  \node[inner sep=0pt] (m2) at (-2, 0 cm) {};
  \node (n3) at ( 3, -1 cm) {$x_{n}$};
  \node[inner sep=0pt] (m3) at ( 3,  0 cm) {};
  \node[draw=none,fill=none] (n4) at (0.5, -1 cm) {$\ldots$};
  \node (nright) at ( 4,  0 cm) {$\beta$};

  \draw (nleft) -- (m1);
  \draw (m1) -- (m2);
  \draw (m2) -- (m3);
  \draw (m3) -- (nright);
  \draw (m1) -- (n1);
  \draw (m2) -- (n2);
  \draw (m3) -- (n3);

\end{tikzpicture}
\caption{A caterpillar with $n+2$ leaves.}
\label{fig:caterpillar}
\end{figure}

\begin{theorem}

  \label{thm:caterpillar}
  The maximum quartet distance between two phylogenetic caterpillar trees on
  $n$ leaves is at most $\lt(\frac 23 + o(1)\rt) \binom{n}{4}$.

\end{theorem}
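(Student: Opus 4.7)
The plan is to reduce \thmref{thm:caterpillar} to a statement about permutation patterns. A caterpillar on $n$ leaves is determined (up to reversal) by a linear ordering of its leaves, so a pair of caterpillars corresponds to a pair of orderings $\sigma_1,\sigma_2$ of $[n]$; the quartet on any 4-subset is determined by which 2-element subset is the outer pair in the ordering, and the two quartets agree iff these outer pairs coincide (or are complementary). After composing with $\sigma_1^{-1}$ we may take $\sigma_1$ to be the identity and $\sigma_2=\pi\in S_n$; then for each $\{a<b<c<d\}$ the quartets agree iff the pattern of $\pi$ on $\{a,b,c,d\}$ lies in the 8-element set
\[
  \mathcal{I} = \{1234,\,2134,\,1243,\,2143,\,3412,\,4312,\,3421,\,4321\}\subset S_4,
\]
characterized by $\{\sigma(1),\sigma(2)\}\in\{\{1,2\},\{3,4\}\}$; equivalently, $\pi$ places $\{a,b\}$ entirely before or entirely after $\{c,d\}$. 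Hence \thmref{thm:caterpillar} reduces to showing that for every $\pi\in S_n$, the number of 4-subsets of $[n]$ whose $\pi$-pattern lies in $\mathcal{I}$ is at least $\bigl(\tfrac{1}{3}-o(1)\bigr)\binom{n}{4}$.

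For each 4-subset $\{a<b<c<d\}$, exactly one of the three partitions of $\{a,b,c,d\}$ into two pairs of size 2 is $\pi$-\emph{separated} (one pair entirely precedes the other in $\pi$): one of I $=\{\{a,b\},\{c,d\}\}$, II $=\{\{a,c\},\{b,d\}\}$, or III $=\{\{a,d\},\{b,c\}\}$. Writing $F_\alpha$ for the count of 4-subsets whose $\pi$-separated partition is of type $\alpha$, we have $F_I+F_{II}+F_{III}=\binom{n}{4}$ and the number of agreeing 4-subsets equals $F_I$. Since a uniformly random permutation achieves $F_I\approx \binom{n}{4}/3$ asymptotically, the theorem asserts that no permutation can do substantially worse.

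The main step is the lower bound $F_I\ge\bigl(\tfrac{1}{3}-o(1)\bigr)\binom{n}{4}$ for every $\pi$. By compactness, after passing to a permuton limit $\mu$ (a probability measure on $[0,1]^2$ with uniform marginals) this reduces to the extremal inequality $d_{\mathcal{I}}(\mu)\ge 1/3$, where $d_{\mathcal{I}}(\mu)$ is the probability that four i.i.d.\ samples from $\mu$ have their bottom-two (by $y$-coordinate) $x$-separated from their top-two; the uniform permuton attains equality. The sharpest route would be to prove $F_I\ge F_{II}$ and $F_I\ge F_{III}$ in an asymptotic sense, which combined with the partition identity would immediately give $F_I\ge\binom{n}{4}/3$; one attempts this via an injection or local swap that reassigns the inner/outer roles of the four elements (noting that at $n=4$ strict inequality can fail, e.g.\ $\pi=1324$ has $F_{II}=1>0=F_I$, so any such argument must use the cancellations that only become available as $n\to\infty$). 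The hardest part is that no obvious pointwise swap yields a clean injection, so a global double-count --- for instance relating $F_{\mathcal{I}}$ to 3-element pattern densities, averaging over random position- or value-cuts in $\pi$, or invoking a flag-algebra SDP specialized to the 24 four-point patterns --- is likely required to squeeze out the tight constant $1/3$.
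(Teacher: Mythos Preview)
Your reduction to permutation patterns is correct and matches the paper's Section~5 exactly: a pair of caterpillars corresponds to a permutation $\pi$, and a 4-subset gives an agreeing quartet precisely when its $\pi$-pattern lies in the set $\mathcal{I}=\{1234,1243,2134,2143,3412,4312,3421,4321\}$. The paper arrives at the same eight patterns and the same reformulation: it suffices to show that the total density of $\mathcal{I}$-patterns in any $\pi\in S_n$ is at least $\tfrac13+o(1)$.

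The gap is that you do not actually prove this key inequality. You reduce it to the permuton statement $d_{\mathcal{I}}(\mu)\ge 1/3$, observe that the naive approach of showing $F_I\ge F_{II}$ and $F_I\ge F_{III}$ fails pointwise, and then conclude only that ``a global double-count \ldots\ or invoking a flag-algebra SDP \ldots\ is likely required.'' That is a description of the difficulty, not a resolution of it. As written, the proposal stops exactly at the point where the real work begins.

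For comparison, the paper settles this step (its Theorem~5.1) by exhibiting an explicit sum-of-squares certificate in the flag algebra of permutations:
\[
  \phi \;=\; \tfrac13 \;+\; \sum_{i=1}^{4}\Bigl( 3\,[[(X_i-Y_i)^2]]_{\rho_i} + 6\,[[(X_i-Z_i)^2]]_{\rho_i}\Bigr),
\]
where $\phi$ is the sum of the eight pattern densities, the $\rho_i$ are the four types of size~2, and $X_i,Y_i,Z_i$ are concrete linear combinations of size-4 $\rho_i$-flags. Since each bracketed term is an averaged square and hence asymptotically nonnegative, this identity immediately yields $\phi\ge\tfrac13$. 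Verifying the identity is a finite check on all permutations of size~6. So the ``flag-algebra SDP'' you allude to is indeed the route taken, but the content of the proof is the explicit certificate, which your proposal does not supply.
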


The set of all caterpillar trees is a simple yet very important subclass of
phylogenetic trees. For instance, the proof of NP-hardness of MQC by
Steel~\cite{Steel} heavily relies on this particular subclass. Namely,
deciding if there exists a tree $T$ that satisfies all the quartets
in a given input set is NP-complete even if we further assume that $T$
is caterpillar.

The rest of this paper is organized as follows. In \secref{sec:prelim}, we
formally define all the relevant notions in phylogenetics that were briefly
mentioned in this introduction. In \secref{sec:flag_intro}, we provide an
informal explanation of our main tool, flag algebras. In \secref{sec:main},
we discuss some of the details of the proof of 
\thmref{thm:main}
and provide a link to the program establishing the proof.
In addition,
we give the proof of \thmref{thm:caterpillar} in \secref{sec:caterpillar}.
Lastly, the final section contains some concluding remarks and open problems.

\section{Preliminaries}
\label{sec:prelim}

A \emph{trivalent tree} is a tree in which all internal vertices
(the non-leaves) have exactly three neighbors. Whenever the leaves of
such trees are labeled bijectively by a taxa (species) set $\mc X$
of size $n$, we shall call them \emph{phylogenetic trees}.
Throughout this paper, unless stated otherwise, all trees are assumed
to be phylogenetic trees. For a tree $T= (V,E)$, the set of leaves of
$T$ is denoted by $\lvs(T)$.

The removal of an edge $e$ in a phylogenetic tree splits it
into two subtrees, and thus induces a \emph{split} among the leaves
of the tree. We identify an edge $e$ by the split $(U, \lvs(T)
\setminus U)$ it generates on the set of leaves, and denote this split
by $U_e$. As external edges (the ones adjacent to the leaves) induce
trivial splits, we consider only the ones induced by internal edges.

Let $T$ be a tree and $A\subseteq \lvs(T)$ a subset of the
leaves of $T$. We denote by $T|_{A}$, the topological subtree
of $T$ induced by $A$ were all leaves in $\lvs(T)\setminus A$
and paths leading exclusively to them are removed, and subsequently
internal vertices with degree two are contracted.

For two trees $T$ and $T'$, we say that $T$ \emph{satisfies}
$T'$ (or , equivalently, that $T'$ is \emph{satisfied} by $T$),
if $\lvs(T') \subseteq \lvs(T)$ and $T|_{\lvs(T')} \simeq T'$, that is,
the subgtree of $T$ induced by $\lvs(T')$ is isomorphic to $T'$.
Otherwise, $T'$ is \emph{violated} by $T$. Let $\mc T = \{T_1,
\ldots, T_k\}$ be a set of trees with possibly overlapping leaves,
and denote by $\lvs(\mc T) = \bigcup_i \lvs(T_i)$, the union
of the set of leaves of all trees $T_i \in \mc T$. Then for a
tree $T$ with $\lvs(T) = \lvs(\mc T)$, we denote by $\mc T_s(T)$
the set of trees in $\mc T$ that are satisfied by $T$. We say that
$\mc T$ is \emph{compatible} if there exists a tree $T^*$ over
the set of leaves $\lvs(\mc T)$ that satisfies every tree $T_i\in
\mc T$, i.e. $\mc T_s(T^*)=\mc T$ (see \figref{fig:compatible}).
We denote by $co(\mc T)$ the set of trees that satisfy $\mc T$ (up
to isomorphisms), $co(\mc T) = \{T: \mc T_s(T) = \mc T\}$.

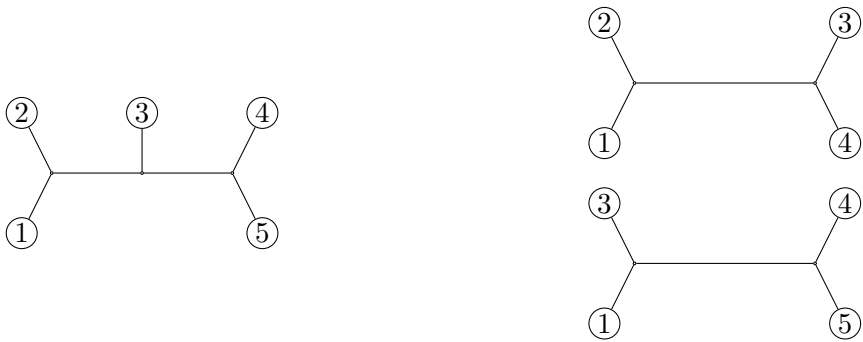
\begin{figure}[H]
\centering
\parbox{3in}{
\centering
\begin{tikzpicture}
  [scale=.8,auto=left,every node/.style={circle,draw,fill=white,inner sep=1pt}]
  \node (n1) at (-2, -1 cm) {$1$};
  \node (n2) at (-2,  1 cm) {$2$};
  \node (n3) at ( 0,  1 cm) {$3$};
  \node (n4) at ( 2,  1 cm) {$4$};
  \node (n5) at ( 2, -1 cm) {$5$};
  \node[inner sep=0pt] (nleft) at ( -1.5,  0 cm) {};
  \node[inner sep=0pt] (nright) at (  1.5,  0 cm) {};
  \node[inner sep=0pt] (nmid) at (  0,  0 cm) {};

  \draw (nleft) -- (nmid);
  \draw (nmid) -- (nright);
  \draw (n1) -- (nleft);
  \draw (n2) -- (nleft);
  \draw (n3) -- (nmid);
  \draw (n4) -- (nright);
  \draw (n5) -- (nright);

\end{tikzpicture}}
\parbox{3in}{
\centering
\begin{tikzpicture}
  [scale=.8,auto=left,every node/.style={circle,draw,fill=white,inner sep=1pt}]
  \node (n1) at (-2,  2 cm) {$1$};
  \node (n2) at (-2,  4 cm) {$2$};
  \node (n3) at ( 2,  4 cm) {$3$};
  \node (n4) at ( 2,  2 cm) {$4$};
  \node[inner sep=0pt] (nleft) at ( -1.5,  3 cm) {};
  \node[inner sep=0pt] (nright) at (  1.5,  3 cm) {};

  \draw (nleft) -- (nright);
  \draw (n1) -- (nleft);
  \draw (n2) -- (nleft);
  \draw (n3) -- (nright);
  \draw (n4) -- (nright);

  \node (m1) at (-2, -1 cm) {$1$};
  \node (m2) at (-2,  1 cm) {$3$};
  \node (m3) at ( 2,  1 cm) {$4$};
  \node (m4) at ( 2, -1 cm) {$5$};
  \node[inner sep=0pt] (mleft) at ( -1.5,  0 cm) {};
  \node[inner sep=0pt] (mright) at (  1.5,  0 cm) {};

  \draw (mleft) -- (mright);
  \draw (m1) -- (mleft);
  \draw (m2) -- (mleft);
  \draw (m3) -- (mright);
  \draw (m4) -- (mright);

\end{tikzpicture}}
\caption{A tree on five leaves and
two quartets compatible with it.}
\label{fig:compatible}
\end{figure}

Further, we say that $T^*$ is \emph{defined} by $\mc T$ if $co(\mc T)$
is the singleton $\{T^*\}$. If there is no such compatible tree $T^*$,
we say that $\mc T$ is \emph{incompatible} (i.e., $co(\mc T) = \emptyset$).

A \emph{quartet} tree (or just a quartet for short), is a phylogenetic
tree over four leaves $\{a, b, c, d\}$. We denote a quartet over
$\{a, b, c, d\}$ as $[ab|cd]$ if there exists an edge $e$ whose
corresponding split $U_e$ satisfies $a,b\in U$ and $c,d\not\in U$.
Quartets are the most elementary informational unit in a phylogenetic
tree, as a pair corresponds to a path in a tree and a triplet to a
vertex (the unique vertex in the intersection of all the pairwise
paths connecting the three leaves). Every phylogenetic tree $T$ with
$n$ leaves defines $\binom{n}{4}$ quartets, one for each set of four
leaves. Let $\mc Q(T)$ denote this full quartet set of $T$.
It is well-known that $\mc Q(T)$ uniquely defines $T$. In fact
Colonius and Schulze~\cite{ColSch} showed that the following proposition
holds.

\begin{proposition}[Colonius and Schulze]

  \label{prop:colsch}
  Let $\mc Q$ be a full quartet set over $n$ species. If every subset of
  three quartets (a quartet triplet) is compatible, then $\mc Q$ is compatible
  and there exists a unique tree defined by $\mc Q$. In fact, if for every five
  taxa $\{a,b,c,d,e\}$ the following holds:
  \begin{equation}
    \label{eqn:consistency}
    \{[ae|bc], [ae|cd]\} \cap \mc Q \ne \emptyset \Rightarrow
    ([ab|cd]\in \mc Q \Rightarrow [be|cd]\in \mc Q),
  \end{equation}
  then $\mc Q$ is consistent.

\end{proposition}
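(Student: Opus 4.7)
The plan is to prove the stronger second assertion by induction on $n = |\lvs(\mc Q)|$; the first statement then follows because pairwise compatibility of any three quartets implies condition (\ref{eqn:consistency}). The base case $n = 4$ is immediate, since a full quartet set on four leaves is a single quartet that is its own realizing tree.

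For $n \geq 5$, I would fix a leaf $x \in \lvs(\mc Q)$ and let $\mc Q'$ be the sub-family of quartets in $\mc Q$ that do not involve $x$. Then $\mc Q'$ is a full quartet set on $n-1$ species, and condition (\ref{eqn:consistency}) restricted to five-tuples avoiding $x$ is a statement purely about $\mc Q'$, so it is inherited. By the inductive hypothesis $\mc Q'$ is realized by a unique tree $T'$ with $\lvs(T') = \lvs(\mc Q) \setminus \{x\}$, and it remains to insert $x$ into $T'$ in a unique way so that the resulting tree realizes every quartet of $\mc Q$ containing $x$.

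The insertion rule is as follows. For each internal edge $e$ of $T'$ with split $(U_e, \ovl{U}_e)$, declare that \emph{$x$ lies on the $U_e$-side of $e$} if there exist $a \in U_e$ and $c, d \in \ovl{U}_e$ with $[ax|cd] \in \mc Q$. I would then verify three claims:
\begin{enumerate}
\item[(i)] for every internal edge $e$, the side of $x$ is well defined, and the witnesses $a, c, d$ are immaterial;
\item[(ii)] the resulting side-assignment over all internal edges is geometrically consistent, so there is a unique edge (or vertex) of $T'$ at which $x$ must be attached by subdivision;
\item[(iii)] the tree $T$ obtained by attaching $x$ at that edge realizes every quartet of $\mc Q$ involving $x$.
\end{enumerate}

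The crux is claim (i), where condition (\ref{eqn:consistency}) is used. Suppose $a, b \in U_e$ and $c, d \in \ovl{U}_e$: then $T'$ already forces $[ab|cd] \in \mc Q$, and applying (\ref{eqn:consistency}) to the five taxa $\{a,b,c,d,x\}$ with the role of the symbol $e$ taken by $x$, the premise $[ax|cd] \in \mc Q$ gives $[bx|cd] \in \mc Q$. Iterating this propagation across $U_e$ and across choices of $c, d$ in $\ovl{U}_e$ shows that any single witness $[ax|cd]$ determines the entire family $\{[a'x|c'd'] : a' \in U_e,\ c', d' \in \ovl{U}_e\}$, and incompatible witnesses on the opposite side are ruled out by the same argument applied symmetrically. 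Claim (ii) then reduces to the standard fact that a coherent assignment of sides to internal edges of a tree corresponds to a unique attachment site, and (iii) follows by expanding any quartet on $\{u,v,w,x\}$ with $u, v, w \in \lvs(T')$ and reducing the required identity to the insertion rule applied to the edge of $T'$ separating the appropriate pair. The main obstacle is claim (i); once it is established the rest is essentially bookkeeping, and uniqueness of $T$ follows from the uniqueness of $T'$ together with the fact that the side rule forces the attachment edge without ambiguity.
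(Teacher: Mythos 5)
The paper does not prove this proposition at all --- it is quoted from Colonius and Schulze \cite{ColSch} as background --- so your proposal has to be judged on its own merits, and as written it has a genuine gap at its crux, claim (i). Your side-assignment rule (``$x$ lies on the $U_e$-side if \emph{there exist} $a\in U_e$ and $c,d\in \ovl U_e$ with $[ax|cd]\in\mc Q$'') is not well defined, even when $\mc Q$ is the quartet set of an honest tree. Take $T'$ to be the caterpillar on $\{1,2,3,4,5\}$ with cherries $\{1,2\}$ and $\{4,5\}$ and middle leaf $3$, let $T$ be obtained by attaching $x$ on the internal edge between the vertex adjacent to $3$ and the vertex adjacent to $4,5$, and let $\mc Q$ be the quartet set of $T$. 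For the edge $e$ of $T'$ with split $U_e=\{1,2\}$, $\ovl U_e=\{3,4,5\}$, the leaf $x$ genuinely sits on the $\{3,4,5\}$-side, yet $[1x|45]\in\mc Q$ witnesses the $\{1,2\}$-side under your rule, while $[4x|12]\in\mc Q$ simultaneously witnesses the other side. So ``the witnesses $a,c,d$ are immaterial'' is false, and both sides can be flagged at once. The same example kills the propagation you invoke: condition \eqref{eqn:consistency} does let you replace $a$ by any $a'\in U_e$ for a \emph{fixed} pair $c,d$ (that step is fine), but it does not let you vary the pair $\{c,d\}$; indeed here $[1x|45]\in\mc Q$ while $[1x|34]\notin\mc Q$ (the quartet of $T$ on $\{1,3,4,x\}$ is $[13|x4]$). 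Since $\mc Q$ is a full quartet set, the ``entire family'' $\{[a'x|c'd']\}$ you claim to derive cannot even coexist in $\mc Q$ in this situation, so the iteration ``across choices of $c,d$'' is not just unproved but false.

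The overall plan (induct on $n$, realize $\mc Q'$ on $n-1$ leaves by a unique $T'$, then insert $x$) is sound, and your reduction of the first assertion to condition \eqref{eqn:consistency} is correct, since the only $5$-leaf tree realizing $[ae|bc]$ and $[ab|cd]$ forces $[be|cd]$. But the insertion step needs a correct localization criterion --- e.g.\ a universally quantified one ($x$ is on the $U_e$-side only if $[ax|cd]\in\mc Q$ for \emph{all} $a\in U_e$, $c,d\in\ovl U_e$), or an orientation of $x$ at each internal vertex of $T'$ toward one of its three subtrees using quartets $[xa|bc]$ with $a,b,c$ in distinct subtrees --- and the entire difficulty of the Colonius--Schulze argument is to show from \eqref{eqn:consistency} that such a criterion is coherent across the tree. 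That is exactly the step your proposal waves through, so as it stands the proof does not go through.
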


Lastly, we would like to briefly sketch the ``random labelling of the
leaves'' argument. Let $T$ be any tree with $n$ leaves labeled by a taxa set
$\mc X$. Consider a random bijection $\pi$ between $\mc X$ and the leaves
of $T$. The corresponding labeled tree is denoted by $T^\pi$. As each
of the $n!$ possible bijections is equally likely, we notice that a quartet
$[ab|cd]$ with labels from $\mc X$ is satisfied by $T^\pi$ with probability
exactly $1/3$. Thus, by linearity of expectation, we have:

\begin{proposition}

  Let $\mc Q$ be an arbitrary set of quartets over a taxa set $\mc X$ of
  size $n$, and let $T^\pi$ be a random bijection between the leaves of a
  tree $T$ and $\mc X$. The expected number of elements in $\mc Q$ satisfied
  by $T$ is $|\mc Q|/3$.

\end{proposition}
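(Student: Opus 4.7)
The proposition is really a one-line linearity-of-expectation computation, and the key content has already been flagged in the sentence preceding the statement. The plan is to isolate an individual quartet, show its probability of being satisfied is exactly $1/3$ by a symmetry argument, and then sum over all quartets in $\mathcal{Q}$.

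More concretely, first I would fix an arbitrary quartet $q = [ab|cd] \in \mathcal{Q}$ with $\{a,b,c,d\} \subseteq \mathcal{X}$, and let $X_q$ be the indicator random variable of the event that $T^\pi$ satisfies $q$. Whether $q$ is satisfied depends only on the four leaves $\pi(a),\pi(b),\pi(c),\pi(d)$ of $T$, and more specifically only on the induced topology $T|_{\{\pi(a),\pi(b),\pi(c),\pi(d)\}}$, which is itself some quartet on these four leaves. By the definition of a phylogenetic tree, exactly one of the three unordered pairings $[ab|cd]$, $[ac|bd]$, $[ad|bc]$ of $\{a,b,c,d\}$ agrees with this induced topology.

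Next I would invoke the symmetry of the uniform random bijection $\pi$: conditioning on the unordered four-element image $\{\pi(a),\pi(b),\pi(c),\pi(d)\}$, each of the $4!$ bijections of $\{a,b,c,d\}$ onto this image is equally likely. Since permuting the preimages $\{a,b,c,d\}$ induces a transitive action on the three possible labeled quartet topologies, each of the three topologies occurs in the labeled tree with equal probability $1/3$. In particular $\Pr[X_q = 1] = 1/3$, independently of which quartet $q$ we chose. The same holds after removing the conditioning, by the tower property.

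Finally, by linearity of expectation,
\[
  \mathbb{E}\bigl[\#\{q \in \mathcal{Q} : T^\pi \text{ satisfies } q\}\bigr]
  = \sum_{q \in \mathcal{Q}} \mathbb{E}[X_q]
  = \sum_{q \in \mathcal{Q}} \tfrac{1}{3} = |\mathcal{Q}|/3,
\]
which is the claim. There is no real obstacle here; the only thing worth being careful about is the symmetry step, since one must justify that uniformly random relabeling of four fixed leaves of $T$ yields each of the three quartet topologies with equal probability, which follows because the symmetric group $S_4$ acts transitively on the three pairings of a $4$-set.
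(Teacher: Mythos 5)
Your argument is correct and is essentially the paper's own: the proposition is stated as an immediate consequence of the preceding ``random labelling of the leaves'' sketch, namely that each quartet $[ab|cd]$ is satisfied by $T^\pi$ with probability exactly $1/3$, followed by linearity of expectation. Your only addition is to spell out the $1/3$ via the transitive $S_4$-symmetry on the three pairings of a $4$-set, which is a fine (and correct) way to justify the step the paper takes for granted.
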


As a consequence, we have the next statement.

\begin{proposition}

  Let $T_1$ and $T_2$ be two random phylogenetic trees over the same taxa set
  $\mc X$ of size $n$, sampled independently and uniformly at random. The
  expected value of the quartet distance $\qd(T_1,T_2)$ is $\frac 23
  \binom{n}{4}$.

\end{proposition}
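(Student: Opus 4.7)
The plan is to apply linearity of expectation, reducing the computation of $\E[\qd(T_1, T_2)]$ to an analysis of a single $4$-subset of leaves. For each $A = \{a,b,c,d\} \in \binom{\mc X}{4}$, let $X_A$ be the indicator of the event that the induced quartets $T_1|_A$ and $T_2|_A$ disagree; by the definition of $\qd$ we have $\qd(T_1, T_2) = \sum_A X_A$, so it suffices to show $\Prob[X_A = 1] = 2/3$ for every fixed $A$.

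To compute this probability I would exploit the relabeling symmetry of the uniform distribution on phylogenetic trees. Fix $A$ and consider any permutation $\sigma$ of $\mc X$ that fixes $\mc X \setminus A$. Applying $\sigma$ to a phylogenetic tree (i.e., relabeling its leaves by $\sigma$) gives a measure-preserving bijection on the sample space, so the distribution of $T_1|_A$ is invariant under the induced $S_4$-action on the labels of $A$. Since $S_4$ acts transitively on the three quartet topologies over $A$ — for instance, swapping $b$ and $c$ sends $[ab|cd]$ to $[ac|bd]$ — the marginal distribution of $T_1|_A$ is uniform on these three topologies, and the same holds independently for $T_2|_A$.

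Consequently, the probability that $T_1|_A$ and $T_2|_A$ coincide as labeled quartets equals $3 \cdot (1/3)^2 = 1/3$, hence $\Prob[X_A = 1] = 2/3$. Summing over all $\binom{n}{4}$ choices of $A$ via linearity of expectation yields
\[
  \E[\qd(T_1, T_2)] \;=\; \sum_{A \in \binom{\mc X}{4}} \Prob[X_A = 1] \;=\; \tfrac{2}{3}\binom{n}{4},
\]
as claimed. There is no real obstacle here; the only step requiring a moment of care is the symmetry argument, which follows from the fact that any relabeling of $\mc X$ is a bijection on the finite set of phylogenetic trees on $\mc X$ and therefore preserves the uniform measure.
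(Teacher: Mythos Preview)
Your proof is correct and follows essentially the same route as the paper: the paper derives this proposition as an immediate consequence of the preceding ``random labelling'' argument (Proposition~2.2), which is precisely the $S_4$-symmetry observation you spell out --- each of the three quartet topologies on a fixed $4$-set is equally likely under the uniform distribution, so a fixed quartet is satisfied with probability $1/3$, and linearity of expectation finishes it. Your version is a bit more self-contained in that you argue the symmetry directly for the uniform measure on trees rather than routing through a random relabelling of a fixed tree, but the content is the same.
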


\section{Flag algebra calculus}
\label{sec:flag_intro}

In this section we provide a brief introduction to the technique of
flag algebras.  First introduced by Razborov in \cite{Raz_flag}, it
has been applied with great success to a wide variety of problems in
extremal combinatorics (see, for example, \cite{Baber, Balogh, DasHuangMaNavSud,
Glebov, Hatami, OlegVaughan, Ravry, Raz_hyper, Raz_tri} and many others).

We begin with a brief explanation on how to map the problem of finding
the maximum quartet distance into a problem in extremal combinatorics.
We then proceed with a general overview of the flag algebra calculus in
the second subsection, by introducing some key definitions and providing
some intuition behind the machinery. The third subsection will show how
we express extremal problems in the language of flag algebras.
It is neither our goal to be rigorous nor thorough, but rather to
emphasize that the combinatorial arguments behind the flag algebra
calculus are as old as extremal combinatorics itself. Indeed, the
main tools available to us are double-counting and the
Cauchy-Schwarz inequality.

The flag algebra calculus is powerful because it provides a
formalism through which the combinatorial problem can be reduced to a
semi-definite programming (SDP) problem. This in turn enables the use of
computers to find solutions, with rigorous proofs, to problems in extremal
combinatorics. For a more complete survey of the technique, we
refer the reader to the excellent expositions in \cite{Keevash} and
\cite{Ravry}, while for a technical specification of flag algebras,
we suggest the original paper of Razborov \cite{Raz_flag}.

\subsection{The model}
\label{sec:model}

In this section, the main object of interest is the \emph{tree-pair}.
From two phylogenetic trees $T_1$ and $T_2$ labelled by the same taxa set,
we would like to create a simple object that ``represents'' the pair
$(T_1,T_2)$ in such a way that we can still compute the quartet distance
$\qd(T_1,T_2)$ from it. Note that the actual set of labels (the taxa set)
is irrelevant in the computation of this distance, so this object shall have no
labels at all. A natural and amenable definition comes to mind. A
\emph{tree-pair} $D$ is a pair of trivalent trees $D=(\ovl{T}_1,
\ovl{T}_2)$ (i.e., unlabelled phylogenetic trees) having the same set of
leaves but having no other vertex in common. In that case we write
$\lvs(D) := \lvs(\ovl{T}_1) = \lvs(\ovl{T}_2)$. From two phylogenetic trees
$T_1$ and $T_2$ over the same taxa set, one can construct a tree-pair
$D$ in the following way. We first identify leaves from $T_1$ and $T_2$ having
the same label and we subsequently remove labels from $T_1$ and $T_2$
altogether to obtain $\ovl{T}_1$ and $\ovl{T}_2$, respectively. We often
represent a tree-pair $D=(\ovl{T}_1, \ovl{T}_2)$ by the graph $\ovl{T}_1
\cup \ovl{T}_2$ which is the \emph{union} of $\ovl{T}_1$ and $\ovl{T}_2$,
that is, $V(\ovl{T}_1\cup \ovl{T}_2) = V(\ovl{T}_1)\cup V(\ovl{T}_2)$
and $E(\ovl{T}_1\cup \ovl{T}_2) = E(\ovl{T}_1) \cup E(\ovl{T}_2)$,
with $\ovl{T}_1$ positioned ``on top'' of $\ovl{T}_2$.
(see \figref{fig:treepair}).

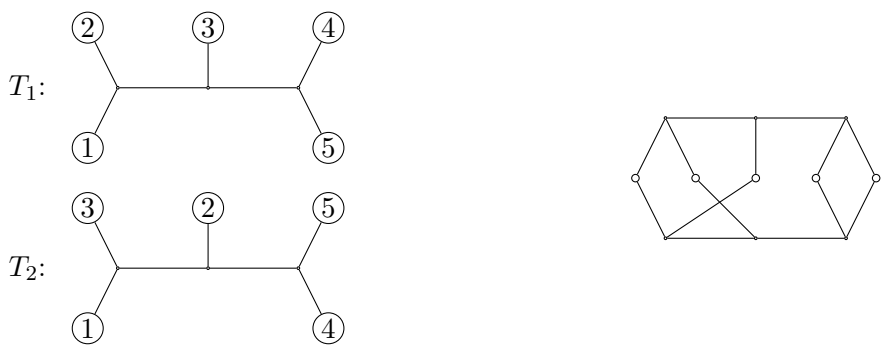
\begin{figure}[H]
\centering
\parbox{3in}{
\centering
\begin{tikzpicture}
  [scale=.8,auto=left,every node/.style={circle,draw,fill=white,inner sep=1pt}]
  \node (n1) at (-2, -1 cm) {$1$};
  \node (n2) at (-2,  1 cm) {$2$};
  \node (n3) at ( 0,  1 cm) {$3$};
  \node (n4) at ( 2,  1 cm) {$4$};
  \node (n5) at ( 2, -1 cm) {$5$};
  \node[inner sep=0pt] (nleft) at ( -1.5,  0 cm) {};
  \node[inner sep=0pt] (nright) at (  1.5,  0 cm) {};
  \node[inner sep=0pt] (nmid) at (  0,  0 cm) {};
  \node[draw=none,fill=none] (nlabel) at (-3, 0 cm) {$T_1$:};

  \draw (nleft) -- (nmid);
  \draw (nmid) -- (nright);
  \draw (n1) -- (nleft);
  \draw (n2) -- (nleft);
  \draw (n3) -- (nmid);
  \draw (n4) -- (nright);
  \draw (n5) -- (nright);

  \node (m1) at (-2, -4 cm) {$1$};
  \node (m2) at (-2, -2 cm) {$3$};
  \node (m3) at ( 0, -2 cm) {$2$};
  \node (m4) at ( 2, -2 cm) {$5$};
  \node (m5) at ( 2, -4 cm) {$4$};
  \node[inner sep=0pt] (mleft) at ( -1.5, -3 cm) {};
  \node[inner sep=0pt] (mright) at (  1.5, -3 cm) {};
  \node[inner sep=0pt] (mmid) at (  0, -3 cm) {};
  \node[draw=none,fill=none] (mlabel) at (-3, -3 cm) {$T_2$:};

  \draw (mleft) -- (mmid);
  \draw (mmid) -- (mright);
  \draw (m1) -- (mleft);
  \draw (m2) -- (mleft);
  \draw (m3) -- (mmid);
  \draw (m4) -- (mright);
  \draw (m5) -- (mright);

\end{tikzpicture}}
\parbox{3in}{
\centering
\begin{tikzpicture}
  [scale=.8,auto=left,every node/.style={circle,draw,fill=white,inner sep=1pt}]
  \node (n1) at (-2,  1 cm) {};
  \node (n2) at (-1,  1 cm) {};
  \node (n3) at ( 0,  1 cm) {};
  \node (n4) at ( 1,  1 cm) {};
  \node (n5) at ( 2,  1 cm) {};
  \node[inner sep=0pt] (nn1) at ( -1.5,  2 cm) {};
  \node[inner sep=0pt] (nn2) at (  0,  2 cm) {};
  \node[inner sep=0pt] (nn3) at (  1.5,  2 cm) {};

  \draw (nn1) -- (nn2);
  \draw (nn2) -- (nn3);
  \draw (n1) -- (nn1);
  \draw (n2) -- (nn1);
  \draw (n3) -- (nn2);
  \draw (n4) -- (nn3);
  \draw (n5) -- (nn3);

  \node[inner sep=0pt] (mm1) at ( -1.5, 0 cm) {};
  \node[inner sep=0pt] (mm2) at (  0, 0 cm) {};
  \node[inner sep=0pt] (mm3) at (  1.5, 0 cm) {};

  \draw (mm1) -- (mm2);
  \draw (mm2) -- (mm3);
  \draw (n1) -- (mm1);
  \draw (n3) -- (mm1);
  \draw (n2) -- (mm2);
  \draw (n5) -- (mm3);
  \draw (n4) -- (mm3);

\end{tikzpicture}}
\caption{Two trees over the same taxa set and the
tree-pair formed by their union.}
\label{fig:treepair}
\end{figure}

Two tree-pairs $D=(\ovl{T}_1, \ovl{T}_2)$ and $D'=(\ovl{T}_1', \ovl{T}_2')$
are \emph{isomorphic} if there exists an isomorphism between the graphs
$\ovl{T}_1\cup \ovl{T}_2$ and $\ovl{T}_1'\cup \ovl{T}_2'$ that maps vertices
of $\ovl{T}_i$ to vertices of $\ovl{T}_i'$ for $i=1,2$, i.e., it respects
the 
component trees. To indicate that two tree-pairs
are isomorphic we write $D \simeq D'$. For a tree-pair $D=(T_1',T_2')$ and a
subset $A\subseteq \lvs(D)$, we write $D|_{A}$ to denote the \emph{sub-tree-pair
induced by $A$}, that is, $D|_{A} := (T_1'|_{A}, T_2'|_{A})$.
Recall that to obtain $T_1'|_{A}$ and $T_2'|_{A}$ we keep the
smallest subtrees of $T_1'$ and $T_2'$ (respectivelly) containing
the leaves in $A$. We subsequently delete one by one all
vertices of degree two by replacing the corresponding path of
length $2$ by an edge until all the vertices not in $A$ have degree
$3$. For two tree-pairs $D$ and $D'$,
if there exists a set $A\subseteq \lvs(D)$ such that $D|_{A} \simeq D'$,
then we say that $D$ contains an \emph{induced copy} of $D'$.

There are only $2$ non-isomorphic tree-pairs having exactly four leaves, namely
$id_4$ and $cr_4$ (see \figref{fig:treepair4}).

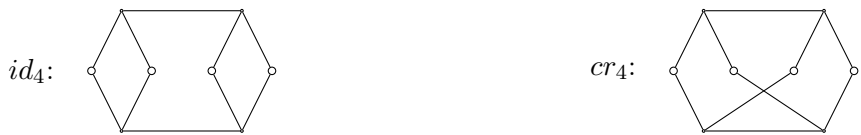
\begin{figure}[H]
\centering
\parbox{3in}{
\centering
\begin{tikzpicture}
  [scale=.8,auto=left,every node/.style={circle,draw,fill=white,inner sep=1pt}]
  \node (n1) at (-1.5,  1 cm) {};
  \node (n2) at (-0.5,  1 cm) {};
  \node (n3) at ( 0.5,  1 cm) {};
  \node (n4) at ( 1.5,  1 cm) {};
  \node[inner sep=0pt] (nn1) at ( -1,  2 cm) {};
  \node[inner sep=0pt] (nn2) at (  1,  2 cm) {};
  \node[draw=none,fill=none] (nlabel) at (-2.5, 1 cm) {$id_4$:};

  \draw (nn1) -- (nn2);
  \draw (n1) -- (nn1);
  \draw (n2) -- (nn1);
  \draw (n3) -- (nn2);
  \draw (n4) -- (nn2);

  \node[inner sep=0pt] (mm1) at ( -1, 0 cm) {};
  \node[inner sep=0pt] (mm2) at (  1, 0 cm) {};

  \draw (mm1) -- (mm2);
  \draw (n1) -- (mm1);
  \draw (n2) -- (mm1);
  \draw (n3) -- (mm2);
  \draw (n4) -- (mm2);

\end{tikzpicture}}
\parbox{3in}{
\centering
\begin{tikzpicture}
  [scale=.8,auto=left,every node/.style={circle,draw,fill=white,inner sep=1pt}]
  \node (n1) at (-1.5,  1 cm) {};
  \node (n2) at (-0.5,  1 cm) {};
  \node (n3) at ( 0.5,  1 cm) {};
  \node (n4) at ( 1.5,  1 cm) {};
  \node[inner sep=0pt] (nn1) at ( -1,  2 cm) {};
  \node[inner sep=0pt] (nn2) at (  1,  2 cm) {};
  \node[draw=none,fill=none] (nlabel) at (-2.5, 1 cm) {$cr_4$:};

  \draw (nn1) -- (nn2);
  \draw (n1) -- (nn1);
  \draw (n2) -- (nn1);
  \draw (n3) -- (nn2);
  \draw (n4) -- (nn2);

  \node[inner sep=0pt] (mm1) at ( -1, 0 cm) {};
  \node[inner sep=0pt] (mm2) at (  1, 0 cm) {};

  \draw (mm1) -- (mm2);
  \draw (n1) -- (mm1);
  \draw (n3) -- (mm1);
  \draw (n2) -- (mm2);
  \draw (n4) -- (mm2);

\end{tikzpicture}}
\caption{The two non-isomorphic tree-pairs on $4$ leaves.}
\label{fig:treepair4}
\end{figure}

A moments thought reveals that the quartet distance $\qd(T_1,T_2)$
can be computed as follows. Let $D=(\ovl{T}_1, \ovl{T}_2)$ be
the tree-pair obtained by joining $T_1$ and $T_2$. Then $\qd(T_1,T_2)$
is simply the number of induced copies of $cr_4$ in $D$. Hence to
prove \thmref{thm:main}, it suffices to show that the following is true.

\begin{theorem}

  \label{thm:main_rephrased}
  Let $D$ be any tree-pair on $n$ leaves. The number of induced
  copies of $cr_4$ in $D$ is at most $(0.69 + o(1))\binom{n}{4}$.

\end{theorem}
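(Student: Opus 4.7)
The plan is to apply Razborov's flag algebra calculus to the theory whose objects are tree-pairs. Since every four-leaf sub-tree-pair is isomorphic to one of the two types $id_4$ or $cr_4$, the density identity $p(cr_4,D)+p(id_4,D)=1$ holds for every tree-pair $D$. My goal is to produce a ``sum-of-averaged-squares'' certificate that forces $p(cr_4,D)\le 0.69+o(1)$ asymptotically. The central feature of the flag algebra framework is that any such certificate translates into a single explicit linear inequality among the densities of induced sub-tree-pairs on some fixed number $N$ of leaves, and is therefore amenable to being discovered by a computer via semidefinite programming.

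First, I would enumerate, up to isomorphism, every tree-pair $F$ on $N$ leaves for an appropriate fixed $N$ (presumably $N=7$ or $N=8$, the largest value for which the resulting SDP remains tractable). For each such $F$ I would compute the exact induced density of $cr_4$ inside $F$; by linearity this reduces the asymptotic problem to bounding the explicit linear functional $\sum_F c_F\, p(F,D)$ from above by $0.69$, where the coefficients $c_F$ are rational and given in closed form.

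Next, for every \emph{type} $\sigma$, meaning a partially labelled tree-pair on some $k<N$ leaves, I would enumerate the $\sigma$-flags of the appropriate size and compute the ``unlabelling'' averaging operator. For any real vector $v$ of $\sigma$-flag densities and any positive semidefinite matrix $Q_\sigma$, the quantity obtained by averaging $v^\top Q_\sigma v$ over random labellings of $\sigma$ is asymptotically nonnegative and, by a direct double-counting argument, can be expanded exactly as a linear combination of the densities $p(F,D)$ with $|\lvs(F)|=N$. Adding these sum-of-squares inequalities over all types, with free positive semidefinite multipliers $Q_\sigma$, yields a semidefinite program whose optimum is a rigorous upper bound on $\limsup_{n\to\infty}\, p(cr_4,D)$ over tree-pairs $D$ on $n$ leaves.

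The final step is to solve this SDP numerically with a solver such as CSDP or SDPA and then round the floating-point solution to an exact rational positive semidefinite certificate, from which the bound $0.69$ can be verified by direct arithmetic. The main obstacle is precisely this rounding step: the SDP will involve thousands of variables, and a naive rounding typically destroys positive semidefiniteness of the $Q_\sigma$. The slack left by aiming for the round target $0.69$, rather than the tightest numerical value the SDP produces, is what makes it possible to round while preserving the semidefiniteness of every $Q_\sigma$. Once such a certificate is in place, the verification reduces to a routine exact-arithmetic check of an explicit linear inequality among finitely many induced tree-pair densities, which can be performed and audited mechanically.
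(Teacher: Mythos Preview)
Your proposal is correct and follows essentially the same approach as the paper: set up the flag-algebra SDP for tree-pairs using types of size less than $N$ and tree-pairs on $N=8$ leaves, solve it numerically (the paper reports $0.688397$), and convert the approximate output into a rigorous certificate. The only minor difference is in the rounding step, where the paper projects the numerical solution onto the affine subspace of exact linear constraints and then adds a small multiple of a strictly positive-definite solution of the homogeneous system to restore semidefiniteness, rather than rounding to rationals; both methods exploit the slack between the SDP optimum and the stated bound $0.69$.
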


\subsection{Definitions and notation in the calculus}
\label{sec:basic_def}

The flag algebra calculus is typically used to find the extremal
density of some fixed structure in a given family of combinatorial
objects. In our case (see \thmref{thm:main_rephrased}), it will be used
to maximize the density of $cr_4$ among all tree-pairs of size $n$,
for $n$ sufficiently large. While the theory of flag algebras is very
general and can be applied to several different types of problems,
we will explain it using only examples related to our particular
setting.

A \emph{type} $\sigma$ is a labeled tree-pair using labels from $[k]$,
where $k = |\lvs(\sigma)|$. That is, each leaf in $\lvs(\sigma)$ is
associated with a label from $[k]$, where $k$ is a nonnegative integer.
The \emph{size} of $\sigma$ is the integer $k$, and is denoted by $|\sigma|$.
\figref{fig:example_types} shows some examples of types.

\vspace{0.1in}
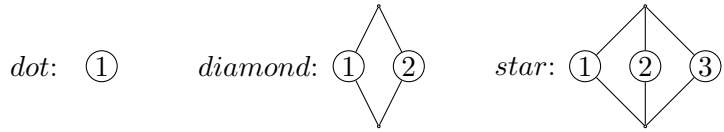
\begin{figure}[H]
\centering

\parbox{1in}{
\centering
\begin{tikzpicture}
  [scale=0.6,auto=left,every node/.style={circle,draw,fill=white,inner sep=1pt}]
  \node (n1) at (0,  0 cm) {$1$};
  \node[draw=none,fill=none] (nlabel) at (-1.5, 0 cm) {$dot$:};

\end{tikzpicture}}
\parbox{1.5in}{
\centering
\begin{tikzpicture}
  [scale=.8,auto=left,every node/.style={circle,draw,fill=white,inner sep=1pt}]
  \node (n1) at (-1.5,  1 cm) {$1$};
  \node (n2) at (-0.5,  1 cm) {$2$};
  \node[inner sep=0pt] (nn1) at ( -1,  2 cm) {};
  \node[draw=none,fill=none] (nlabel) at (-3, 1 cm) {$diamond$:};

  \draw (n1) -- (nn1);
  \draw (n2) -- (nn1);

  \node[inner sep=0pt] (mm1) at ( -1, 0 cm) {};

  \draw (n1) -- (mm1);
  \draw (n2) -- (mm1);

\end{tikzpicture}}
\parbox{1.5in}{
\centering
\begin{tikzpicture}
  [scale=.8,auto=left,every node/.style={circle,draw,fill=white,inner sep=1pt}]
  \node (n1) at (-1,  1 cm) {$1$};
  \node (n2) at (0,  1 cm) {$2$};
  \node (n3) at (1,  1 cm) {$3$};
  \node[inner sep=0pt] (nn1) at ( 0,  2 cm) {};
  \node[draw=none,fill=none] (nlabel) at (-2, 1 cm) {$star$:};

  \draw (n1) -- (nn1);
  \draw (n2) -- (nn1);
  \draw (n3) -- (nn1);

  \node[inner sep=0pt] (mm1) at ( 0, 0 cm) {};

  \draw (n1) -- (mm1);
  \draw (n2) -- (mm1);
  \draw (n3) -- (mm1);

\end{tikzpicture}}
\caption{Examples of types.}
\label{fig:example_types}
\end{figure}

In what follows, an isomorphism between tree-pairs must preserve any labels
that are present. Given a type $\sigma$, a \emph{$\sigma$-flag} is a tree-pair
$F$ on a partially labeled set of leaves, such that the sub-tree-pair induced
by the labeled leaves is isomorphic to $\sigma$. The \emph{underlying tree-pair}
of the flag $F$ is the tree-pair $F$ with all labels removed. The \emph{size}
of a flag is the number of leaves, that is, $|\lvs(F)|$. Note that when $\sigma$
is the \emph{trivial type} of size $0$ (denoted by $\sigma=0$), a $\sigma$-flag
is just a usual unlabeled tree-pair. We shall write $\mc{F}^\sigma_l$ for
the collection of all $\sigma$-flags of size $l$ (up to isomorphism). In
\figref{fig:F4_dot} we list all flags in $\mc{F}^{dot}_4$. Let
$\mc{F}^{\sigma}=\bigcup_{l \ge |\sigma|} \mc{F}^{\sigma}_l$. When the type
$\sigma$ is trivial, we shall omit the superscript from our notation.

\vspace{0.1in}
\begin{figure}[H]
\centering
\parbox{2in}{
\centering
\begin{tikzpicture}
  [scale=.8,auto=left,every node/.style={circle,draw,fill=white,inner sep=1pt}]
  \node (n1) at (-1.5,  1 cm) {$1$};
  \node (n2) at (-0.5,  1 cm) {};
  \node (n3) at ( 0.5,  1 cm) {};
  \node (n4) at ( 1.5,  1 cm) {};
  \node[inner sep=0pt] (nn1) at ( -1,  2 cm) {};
  \node[inner sep=0pt] (nn2) at (  1,  2 cm) {};
  \node[draw=none,fill=none] (nlabel) at (-2.5, 1 cm) {$id_4^1$:};

  \draw (nn1) -- (nn2);
  \draw (n1) -- (nn1);
  \draw (n2) -- (nn1);
  \draw (n3) -- (nn2);
  \draw (n4) -- (nn2);

  \node[inner sep=0pt] (mm1) at ( -1, 0 cm) {};
  \node[inner sep=0pt] (mm2) at (  1, 0 cm) {};

  \draw (mm1) -- (mm2);
  \draw (n1) -- (mm1);
  \draw (n2) -- (mm1);
  \draw (n3) -- (mm2);
  \draw (n4) -- (mm2);

\end{tikzpicture}}
\parbox{2in}{
\centering
\begin{tikzpicture}
  [scale=.8,auto=left,every node/.style={circle,draw,fill=white,inner sep=1pt}]
  \node (n1) at (-1.5,  1 cm) {$1$};
  \node (n2) at (-0.5,  1 cm) {};
  \node (n3) at ( 0.5,  1 cm) {};
  \node (n4) at ( 1.5,  1 cm) {};
  \node[inner sep=0pt] (nn1) at ( -1,  2 cm) {};
  \node[inner sep=0pt] (nn2) at (  1,  2 cm) {};
  \node[draw=none,fill=none] (nlabel) at (-2.5, 1 cm) {$cr_4^1$:};

  \draw (nn1) -- (nn2);
  \draw (n1) -- (nn1);
  \draw (n2) -- (nn1);
  \draw (n3) -- (nn2);
  \draw (n4) -- (nn2);

  \node[inner sep=0pt] (mm1) at ( -1, 0 cm) {};
  \node[inner sep=0pt] (mm2) at (  1, 0 cm) {};

  \draw (mm1) -- (mm2);
  \draw (n1) -- (mm1);
  \draw (n3) -- (mm1);
  \draw (n2) -- (mm2);
  \draw (n4) -- (mm2);

\end{tikzpicture}}
\caption{Family $\mc{F}^{dot}_4$.}
\label{fig:F4_dot}
\end{figure}

\medskip

Let us now define two fundamental concepts in our calculus, namely
those of flag densities in larger flags and tree-pairs. Let $\sigma$ be
a type of size $k$, let $m\ge 1$ be an integer and let
\{$F_i\}_{i=1}^m$ be a collection of $\sigma$-flags of sizes $l_i =
|F_i| \ge k$. Given a $\sigma$-flag $F$ of order at least $l = k +
\sum_{i=1}^m (l_i-k)$, let $A \subseteq \lvs(F)$ be the set of labeled
leaves of $F$. Now select disjoint subsets $X_i\subseteq \lvs(F)
\setminus A$ of sizes $|X_i| = l_i-k$, uniformly at random. This is
possible because $F$ has at least $\sum_i (l_i - k)$ unlabeled
leaves. Denote by $E_i$ the event that the $\sigma$-flag induced
by $A \cup X_i$ is isomorphic to $F_i$, for $i\in [m]$. We define
$p_\sigma(F_1,F_2,\ldots,F_m; F):=
\Prob\lt[\cap_{i=1}^m E_i\rt]$ to be the probability that all these
events occur simultaneously.

If $D$ is just a tree-pair of order at least $l$, and not a
$\sigma$-flag, then there is no pre-labeled set of leaves $A$ that
induces the type $\sigma$.  Instead, we uniformly at random select a
partial labeling $L : [k] \rightarrow \lvs(D)$. This random labeling
turns $D$ into a $\sigma'$-flag $F_L$, where the type $\sigma'$ is
the labeled sub-tree-pair induced by the set of vertices $L([k])$.
If $\sigma' = \sigma$, we can then
proceed as above, otherwise we say the events $E_i$ have probability
$0$. Finally, we average over all possible random labellings.
Formally, let $Y$ be the following random variable
\[
  Y := \left\{\begin{array}{ll}
  p_{\sigma}(F_1,F_2,\ldots, F_m; F_L) & \text{if } \sigma'=\sigma \\
  0 & \text{otherwise}
  \end{array}\right..
\]
Define $d_{\sigma}(F_1,\ldots, F_m;D):=\E[Y]$ as the
expected value of the random variable $Y$. The quantities
$p_\sigma(F_1,F_2,\ldots, F_m; F)$ and $d_\sigma(F_1,F_2,\ldots, F_m;D)$
are called \emph{flag densities} of $\{F_i\}_{i\in[m]}$ in $F$ and in $D$,
respectively. Clearly these flag densities are the same whenever $\sigma=0$,
in which case we omit the subscript from both notations.

\medskip

To better illustrate these definitions, we give some examples.
These flags are shown in \figref{fig:example_flags}.

\vspace{0.1in}
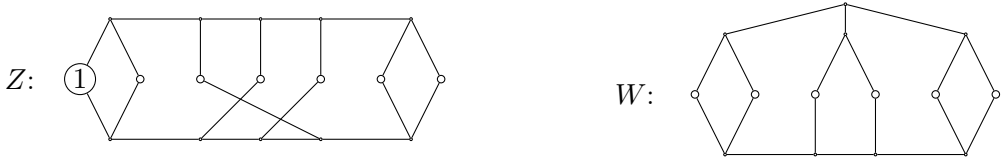
\begin{figure}[H]
\centering

\parbox{3in}{
\centering
\begin{tikzpicture}
  [scale=.8,auto=left,every node/.style={circle,draw,fill=white,inner sep=1pt}]
  \node (n1) at (  -3,  1 cm) {$1$};
  \node (n2) at (  -2,  1 cm) {};
  \node (n3) at (  -1,  1 cm) {};
  \node (n4) at (   0,  1 cm) {};
  \node (n5) at (   1,  1 cm) {};
  \node (n6) at (   2,  1 cm) {};
  \node (n7) at (   3,  1 cm) {};
  \node[inner sep=0pt] (nn1) at ( -2.5,  2 cm) {};
  \node[inner sep=0pt] (nn2) at ( -1,  2 cm) {};
  \node[inner sep=0pt] (nn3) at (  0,  2 cm) {};
  \node[inner sep=0pt] (nn4) at (  1,  2 cm) {};
  \node[inner sep=0pt] (nn5) at (  2.5,  2 cm) {};
  \node[draw=none,fill=none] (nlabel) at (-4, 1 cm) {$Z$:};

  \draw (nn1) -- (nn2);
  \draw (nn2) -- (nn3);
  \draw (nn3) -- (nn4);
  \draw (nn4) -- (nn5);
  \draw (n1) -- (nn1);
  \draw (n2) -- (nn1);
  \draw (n3) -- (nn2);
  \draw (n4) -- (nn3);
  \draw (n5) -- (nn4);
  \draw (n6) -- (nn5);
  \draw (n7) -- (nn5);

  \node[inner sep=0pt] (mm1) at ( -2.5, 0 cm) {};
  \node[inner sep=0pt] (mm2) at ( -1, 0 cm) {};
  \node[inner sep=0pt] (mm3) at (  0, 0 cm) {};
  \node[inner sep=0pt] (mm4) at (  1, 0 cm) {};
  \node[inner sep=0pt] (mm5) at (  2.5, 0 cm) {};

  \draw (mm1) -- (mm2);
  \draw (mm2) -- (mm3);
  \draw (mm3) -- (mm4);
  \draw (mm4) -- (mm5);
  \draw (n1) -- (mm1);
  \draw (n2) -- (mm1);
  \draw (n4) -- (mm2);
  \draw (n5) -- (mm3);
  \draw (n3) -- (mm4);
  \draw (n6) -- (mm5);
  \draw (n7) -- (mm5);

\end{tikzpicture}}
\parbox{3in}{
\centering
\begin{tikzpicture}
  [scale=.8,auto=left,every node/.style={circle,draw,fill=white,inner sep=1pt}]
  \node (n1) at (  -3,  1 cm) {};
  \node (n2) at (  -2,  1 cm) {};
  \node (n3) at (  -1,  1 cm) {};
  \node (n4) at (   0,  1 cm) {};
  \node (n5) at (   1,  1 cm) {};
  \node (n6) at (   2,  1 cm) {};
  \node[inner sep=0pt] (nn1) at ( -2.5,  2 cm) {};
  \node[inner sep=0pt] (nn2) at ( -0.5,  2 cm) {};
  \node[inner sep=0pt] (nn3) at (  1.5,  2 cm) {};
  \node[inner sep=0pt] (nntop) at (  -0.5,  2.5 cm) {};
  \node[draw=none,fill=none] (nlabel) at (-4, 1 cm) {$W$:};

  \draw (nn1) -- (nntop);
  \draw (nn2) -- (nntop);
  \draw (nn3) -- (nntop);
  \draw (n1) -- (nn1);
  \draw (n2) -- (nn1);
  \draw (n3) -- (nn2);
  \draw (n4) -- (nn2);
  \draw (n5) -- (nn3);
  \draw (n6) -- (nn3);

  \node[inner sep=0pt] (mm1) at ( -2.5, 0 cm) {};
  \node[inner sep=0pt] (mm2) at ( -1, 0 cm) {};
  \node[inner sep=0pt] (mm3) at (  0, 0 cm) {};
  \node[inner sep=0pt] (mm4) at (  1.5, 0 cm) {};

  \draw (mm1) -- (mm2);
  \draw (mm2) -- (mm3);
  \draw (mm3) -- (mm4);
  \draw (n1) -- (mm1);
  \draw (n2) -- (mm1);
  \draw (n3) -- (mm2);
  \draw (n4) -- (mm3);
  \draw (n5) -- (mm4);
  \draw (n6) -- (mm4);

\end{tikzpicture}}
\caption{Example flags.}
\label{fig:example_flags}
\end{figure}

We turn to compute the flag densities of $id_4^1$ and $cr_4^1$ in
the flag $Z$. For example, to compute $p_{dot}(id_4^1;
Z)$, note that to induce a copy of $id_4^1$ we must choose exactly
3 other unlabeled leaves which together with the labeled leaf $1$
induce a copy of $id_4^1$. There are $\binom{6}{3}=20$ ways to make the
choice of 3 unlabeled leaves, and out of the $20$ exactly $15$ induce
a copy of $id_4^1$, thus $p_{dot}(id_4^1;Z)=\frac 34$. Similarly
we obtain $p_{dot}(cr_4^1;Z) = \frac 14$.
We can also compute the joint flag densities of multiple flags. For
instance, let us consider $p_{dot}(id_4^1, id_4^1; Z)$. In this case,
we first choose a set $X_1$ of $3$ unlabeled leaves uniformly at random
and we subsequently choose a set $X_2$ of $3$ other unlabeled leaves
also uniformly at random. Since the choice of $X_2$ is uniquely determined
given the choice of $X_1$, there are exactly $\binom{6}{3}$ possible choices
for the pair $(X_1,X_2)$. Out of these $20$ choices, one can
count that exactly $10$ of them will be such that both $X_1$ and $X_2$
will induce a copy of $id_4^1$ when we add the labeled leaf.
The order matters here: when computing $p(F_1,F_2;F)$, the set $X_1$ must
induce a copy of $F_1$ while $X_2$ must induce a copy of 
$F_2$.
Thus $p_{dot}(id_4^1, id_4^1; Z) = \frac12$. Similarly, we have
$p_{dot}(id_4^1, cr_4^1; Z) = p_{dot}(cr_4^1,id_4^1;Z)= \frac 14$ and
$p_{dot}(cr_4^1, cr_4^1; Z) = 0$.

The computation of flag densities $d_{dot}$ for unlabeled tree-pairs
is a little more involved. To see how to compute it, we consider
$W$ depicted in \figref{fig:example_flags} as an example.
There are two non-isomorphic $dot$-flags whose underlying tree-pair
is $W$, namely $W_1^1$ and $W_2^1$ as shown in \figref{fig:flags_of_w}.

\vspace{0.1in}
\begin{figure}[H]
\centering
\parbox{2.5in}{
\centering
\begin{tikzpicture}
  [scale=.8,auto=left,every node/.style={circle,draw,fill=white,inner sep=1pt}]
  \node (n1) at (  -3,  1 cm) {$1$};
  \node (n2) at (  -2,  1 cm) {};
  \node (n3) at (  -1,  1 cm) {};
  \node (n4) at (   0,  1 cm) {};
  \node (n5) at (   1,  1 cm) {};
  \node (n6) at (   2,  1 cm) {};
  \node[inner sep=0pt] (nn1) at ( -2.5,  2 cm) {};
  \node[inner sep=0pt] (nn2) at ( -0.5,  2 cm) {};
  \node[inner sep=0pt] (nn3) at (  1.5,  2 cm) {};
  \node[inner sep=0pt] (nntop) at ( -0.5,  2.5 cm) {};
  \node[draw=none,fill=none] (nlabel) at (-4, 1 cm) {$W_1^1$:};

  \draw (nn1) -- (nntop);
  \draw (nn2) -- (nntop);
  \draw (nn3) -- (nntop);
  \draw (n1) -- (nn1);
  \draw (n2) -- (nn1);
  \draw (n3) -- (nn2);
  \draw (n4) -- (nn2);
  \draw (n5) -- (nn3);
  \draw (n6) -- (nn3);

  \node[inner sep=0pt] (mm1) at ( -2.5, 0 cm) {};
  \node[inner sep=0pt] (mm2) at ( -1, 0 cm) {};
  \node[inner sep=0pt] (mm3) at (  0, 0 cm) {};
  \node[inner sep=0pt] (mm4) at (  1.5, 0 cm) {};

  \draw (mm1) -- (mm2);
  \draw (mm2) -- (mm3);
  \draw (mm3) -- (mm4);
  \draw (n1) -- (mm1);
  \draw (n2) -- (mm1);
  \draw (n3) -- (mm2);
  \draw (n4) -- (mm3);
  \draw (n5) -- (mm4);
  \draw (n6) -- (mm4);

\end{tikzpicture}}
\parbox{2.5in}{
\centering
\begin{tikzpicture}
  [scale=.8,auto=left,every node/.style={circle,draw,fill=white,inner sep=1pt}]
  \node (n1) at (  -3,  1 cm) {};
  \node (n2) at (  -2,  1 cm) {};
  \node (n3) at (  -1,  1 cm) {$1$};
  \node (n4) at (   0,  1 cm) {};
  \node (n5) at (   1,  1 cm) {};
  \node (n6) at (   2,  1 cm) {};
  \node[inner sep=0pt] (nn1) at ( -2.5,  2 cm) {};
  \node[inner sep=0pt] (nn2) at ( -0.5,  2 cm) {};
  \node[inner sep=0pt] (nn3) at (  1.5,  2 cm) {};
  \node[inner sep=0pt] (nntop) at ( -0.5,  2.5 cm) {};
  \node[draw=none,fill=none] (nlabel) at (-4, 1 cm) {$W_2^1$:};

  \draw (nn1) -- (nntop);
  \draw (nn2) -- (nntop);
  \draw (nn3) -- (nntop);
  \draw (n1) -- (nn1);
  \draw (n2) -- (nn1);
  \draw (n3) -- (nn2);
  \draw (n4) -- (nn2);
  \draw (n5) -- (nn3);
  \draw (n6) -- (nn3);

  \node[inner sep=0pt] (mm1) at ( -2.5, 0 cm) {};
  \node[inner sep=0pt] (mm2) at ( -1, 0 cm) {};
  \node[inner sep=0pt] (mm3) at (  0, 0 cm) {};
  \node[inner sep=0pt] (mm4) at (  1.5, 0 cm) {};

  \draw (mm1) -- (mm2);
  \draw (mm2) -- (mm3);
  \draw (mm3) -- (mm4);
  \draw (n1) -- (mm1);
  \draw (n2) -- (mm1);
  \draw (n3) -- (mm2);
  \draw (n4) -- (mm3);
  \draw (n5) -- (mm4);
  \draw (n6) -- (mm4);

\end{tikzpicture}}
\caption{$dot$-flags for $W$.}
\label{fig:flags_of_w}
\end{figure}
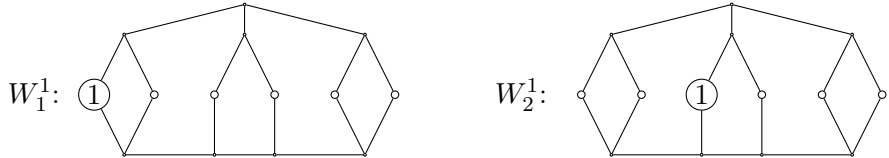

If we randomly label a leaf from $W$, then with probability $\frac 23$
it will become $W_1^1$ and with probability $\frac 13$ it will become $W_2^1$.
Moreover, since $p(id_4^1; W_1^1) = \frac45$ and $p_{dot}(id_4^1; W_2^1)=
\frac 35$, we have $d_{dot}(id_4^1; W) = \frac23 p_{dot}(id_4^1; W_1^1)+\frac13
p_{dot}(id_4^1; W_2^1)=\frac{11}{15}$. Similarly we have
$p_{dot}(cr_4^{1}; W_1^1)=\frac 15$ and $p_{dot}(cr_4^{1}; W_2^1)=\frac 25$,
hence $d_{dot}(cr_4^{1}; W)=\frac{4}{15}$.

The reader might notice that there is an alternative way to compute,
say, $d_{dot}(cr_4^{1}; W)$: we simply compute the product of
$d_{dot}(cr_4^{1}; cr_4) \cdot p(cr_4; W)=1\cdot\frac{4}{15}=\frac{4}{15}$.
In general, suppose as before that we
have a type $\sigma$ of size $k$, a $\sigma$-flag $F$ of size $l \ge k$,
and an unlabeled tree-pair $D$. To compute $d_{\sigma}(F ; D)$, we averaged
over all random partial labelings of $D$ the probability of finding a flag
isomorphic to $F$. A simple double-counting argument shows that we can do the
``averaging'' before the random labeling, which is the idea behind
Razborov's \emph{averaging operator}, as defined in Section 2.2 of
\cite{Raz_flag}. Let $F|_0$ denote the unlabeled underlying model of
$F$. We can compute $d_{\sigma}(F;D)$ by first computing
$d(F|_0;D)$, the probability that $l$ randomly chosen vertices in
$D$ form an induced copy of $F|_0$ as a sub-model. Given this copy
of $F|_0$, we then randomly label $k$ of the $l$ vertices, and
compute the probability that these $k$ vertices are label-isomorphic
to $\sigma$.  This amounts to multiplying $d(F|_0; D)$ by a
\emph{normalizing factor} $q_{\sigma}(F)$, that is, $d_{\sigma}(F;D)
= q_{\sigma}(F) d(F|_0;D)= q_{\sigma}(F) p(F|_0;D)$.
We can interpret the normalizing factor as $q_{\sigma}(F) =
d_{\sigma}(F ; F|_0)$.

\medskip

There are more relations involving $d_\sigma$ and $p_\sigma$ than
the one mentioned previously. We will now state, without proof, a
basic fact about flag densities that can be proved easily by double
counting.

\begin{proposition}[Chain rule]

  \label{prop:chain_rule}
  If $\sigma$ is a type of size $k$, $m\ge 1$ is an integer, and
  $\{F_i\}_{i=1}^m$ is a family of $\sigma$-flags of sizes $|F_i|=l_i$,
  and $l \ge k + \sum_{i=1}^m (l_i - k)$ is an integer parameter, then
  \begin{enumerate}
    \item For any $\sigma$-flag $F$ of order at least $l$, we have
    \[
      p_{\sigma}(F_1,\ldots,F_m;F) = \sum_{F' \in \mc{F}^{\sigma}_{l}}
      p_{\sigma}(F_1,\ldots, F_m;F') p_{\sigma}(F';F).
    \]
    \item For any tree-pair $D$ of size at least $l$, we have
    \[
      d_{\sigma}(F_1,\ldots,F_m;D) = \sum_{H \in \mc{F}_{l}}
      d_{\sigma}(F_1,\ldots,F_m;H) d(H;D)=
      \sum_{F \in \mc{F}^{\sigma}_{l}}
        p_{\sigma}(F_1,\ldots, F_m;F) d_{\sigma}(F;D).
    \]
  \end{enumerate}

\end{proposition}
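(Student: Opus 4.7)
The plan is to derive both parts of the proposition by a law-of-total-probability argument, interpreting each side as the probability (or expected probability) of the same event under two equivalent random sampling schemes that differ only by the insertion of an intermediate ``padding'' step.

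For part (1), fix the $\sigma$-flag $F$ with labeled leaf set $A$ of size $k$. By definition, $p_\sigma(F_1,\ldots,F_m;F)$ is the probability that uniformly random disjoint $X_i\subseteq \lvs(F)\setminus A$ of sizes $l_i-k$ satisfy $A\cup X_i\simeq F_i$ for each $i$. I would introduce an intermediate stage: first select a uniformly random $X\subseteq \lvs(F)\setminus A$ of size $l-k$, producing a $\sigma$-flag $F'\in \mc{F}^\sigma_l$ on $A\cup X$, and then select the $X_i$'s as uniformly random disjoint subsets of $X$. A short hypergeometric counting check shows that the resulting joint distribution on $(X_1,\ldots,X_m)$ coincides with the direct one-step distribution: both are uniform over disjoint tuples of the prescribed sizes, because the number of $(l-k)$-subsets $X$ of $\lvs(F)\setminus A$ containing a given $\bigcup_i X_i$ equals the constant $\binom{|\lvs(F)|-k-\sum_i(l_i-k)}{\,l-k-\sum_i(l_i-k)\,}$, independent of the particular tuple. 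Conditioning on which $F'$ is produced and observing that, by symmetry, the conditional probability that the $X_i$ induce $F_i$ depends only on $F'$ and equals $p_\sigma(F_1,\ldots,F_m;F')$, the law of total probability yields the identity in (1).

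For part (2), the same philosophy applies at the level of tree-pairs. For the first equality, recall that $d_\sigma(F_1,\ldots,F_m;D)$ is defined by a uniformly random labeling $L:[k]\to\lvs(D)$ followed by disjoint $X_i\subseteq\lvs(D)\setminus L([k])$. I would insert the intermediate step of first picking a uniformly random $l$-subset $H\subseteq \lvs(D)$ and then carrying out both $L$ and the selection of the $X_i$ inside $H$; an elementary counting check shows this is distributionally equivalent to the original scheme. By the tower property, $d_\sigma(F_1,\ldots,F_m;D)=\E_H[\,d_\sigma(F_1,\ldots,F_m;H)\,]$, which, after writing the expectation as a sum weighted by $d(H;D)$, is exactly the first displayed identity. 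For the second equality, condition instead on the $\sigma$-flag $F\in\mc{F}^\sigma_l$ obtained by augmenting $L$ with a uniformly random $(l-k)$-subset of $\lvs(D)\setminus L([k])$; part (1) applied to this $F$ expresses the conditional probability of the $X_i$ inducing the $F_i$ as $p_\sigma(F_1,\ldots,F_m;F)$, and averaging over the random $F$ produces the sum $\sum_{F}p_\sigma(F_1,\ldots,F_m;F)\,d_\sigma(F;D)$.

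The only real obstacle is routine bookkeeping: verifying that each two-step sampling scheme really does reproduce the correct joint distribution, so that the conditional probabilities factor cleanly as advertised. Each such verification reduces to a standard hypergeometric identity together with the observation that all random choices involved are uniform over sets on which the leaf-permutation group acts transitively. No ingredient beyond double counting and this symmetry is required, which is precisely why the proposition is stated without proof in the exposition.
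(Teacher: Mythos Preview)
Your proposal is correct and matches the paper's intended approach: the paper states the proposition without proof, remarking only that it ``can be proved easily by double counting,'' and your two-step sampling argument with the hypergeometric padding check is precisely such a double-counting proof. You even anticipate this at the end of your write-up, so there is nothing to add.
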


To illustrate the chain rule for $m=1$ and $\sigma=0$, we consider the
``expansion'' of $id_4$ in $\mc F_5$ (see \figref{fig:F_5}).

\vspace{0.1in}
\begin{figure}[H]
\centering
\parbox{1.5in}{
\centering
\begin{tikzpicture}
  [scale=.6,auto=left,every node/.style={circle,draw,fill=white,inner sep=1pt}]
  \node (n1) at (  -2,  1 cm) {};
  \node (n2) at (  -1,  1 cm) {};
  \node (n3) at (   0,  1 cm) {};
  \node (n4) at (   1,  1 cm) {};
  \node (n5) at (   2,  1 cm) {};
  \node[inner sep=0pt] (nn1) at ( -1.5,  2 cm) {};
  \node[inner sep=0pt] (nn2) at (  0,  2 cm) {};
  \node[inner sep=0pt] (nn3) at (  1.5,  2 cm) {};
  \node[draw=none,fill=none] (nlabel) at (-3, 1 cm) {$id_5$:};

  \draw (nn1) -- (nn2);
  \draw (nn2) -- (nn3);
  \draw (n1) -- (nn1);
  \draw (n2) -- (nn1);
  \draw (n3) -- (nn2);
  \draw (n4) -- (nn3);
  \draw (n5) -- (nn3);

  \node[inner sep=0pt] (mm1) at ( -1.5, 0 cm) {};
  \node[inner sep=0pt] (mm2) at (  0, 0 cm) {};
  \node[inner sep=0pt] (mm3) at (  1.5, 0 cm) {};

  \draw (mm1) -- (mm2);
  \draw (mm2) -- (mm3);
  \draw (n1) -- (mm1);
  \draw (n2) -- (mm1);
  \draw (n3) -- (mm2);
  \draw (n4) -- (mm3);
  \draw (n5) -- (mm3);

\end{tikzpicture}}
\parbox{1.5in}{
\centering
\begin{tikzpicture}
  [scale=.6,auto=left,every node/.style={circle,draw,fill=white,inner sep=1pt}]
  \node (n1) at (  -2,  1 cm) {};
  \node (n2) at (  -1,  1 cm) {};
  \node (n3) at (   0,  1 cm) {};
  \node (n4) at (   1,  1 cm) {};
  \node (n5) at (   2,  1 cm) {};
  \node[inner sep=0pt] (nn1) at ( -1.5,  2 cm) {};
  \node[inner sep=0pt] (nn2) at (  0,  2 cm) {};
  \node[inner sep=0pt] (nn3) at (  1.5,  2 cm) {};
  \node[draw=none,fill=none] (nlabel) at (-3, 1 cm) {$cr_5^A$:};

  \draw (nn1) -- (nn2);
  \draw (nn2) -- (nn3);
  \draw (n1) -- (nn1);
  \draw (n2) -- (nn1);
  \draw (n3) -- (nn2);
  \draw (n4) -- (nn3);
  \draw (n5) -- (nn3);

  \node[inner sep=0pt] (mm1) at ( -1.5, 0 cm) {};
  \node[inner sep=0pt] (mm2) at (  0, 0 cm) {};
  \node[inner sep=0pt] (mm3) at (  1.5, 0 cm) {};

  \draw (mm1) -- (mm2);
  \draw (mm2) -- (mm3);
  \draw (n1) -- (mm1);
  \draw (n2) -- (mm1);
  \draw (n4) -- (mm2);
  \draw (n3) -- (mm3);
  \draw (n5) -- (mm3);

\end{tikzpicture}}
\parbox{1.5in}{
\centering
\begin{tikzpicture}
  [scale=.6,auto=left,every node/.style={circle,draw,fill=white,inner sep=1pt}]
  \node (n1) at (  -2,  1 cm) {};
  \node (n2) at (  -1,  1 cm) {};
  \node (n3) at (   0,  1 cm) {};
  \node (n4) at (   1,  1 cm) {};
  \node (n5) at (   2,  1 cm) {};
  \node[inner sep=0pt] (nn1) at ( -1.5,  2 cm) {};
  \node[inner sep=0pt] (nn2) at (  0,  2 cm) {};
  \node[inner sep=0pt] (nn3) at (  1.5,  2 cm) {};
  \node[draw=none,fill=none] (nlabel) at (-3, 1 cm) {$cr_5^B$:};

  \draw (nn1) -- (nn2);
  \draw (nn2) -- (nn3);
  \draw (n1) -- (nn1);
  \draw (n2) -- (nn1);
  \draw (n3) -- (nn2);
  \draw (n4) -- (nn3);
  \draw (n5) -- (nn3);

  \node[inner sep=0pt] (mm1) at ( -1.5, 0 cm) {};
  \node[inner sep=0pt] (mm2) at (  0, 0 cm) {};
  \node[inner sep=0pt] (mm3) at (  1.5, 0 cm) {};

  \draw (mm1) -- (mm2);
  \draw (mm2) -- (mm3);
  \draw (n1) -- (mm1);
  \draw (n4) -- (mm1);
  \draw (n2) -- (mm2);
  \draw (n3) -- (mm3);
  \draw (n5) -- (mm3);

\end{tikzpicture}}
\parbox{1.5in}{
\centering
\begin{tikzpicture}
  [scale=.6,auto=left,every node/.style={circle,draw,fill=white,inner sep=1pt}]
  \node (n1) at (  -2,  1 cm) {};
  \node (n2) at (  -1,  1 cm) {};
  \node (n3) at (   0,  1 cm) {};
  \node (n4) at (   1,  1 cm) {};
  \node (n5) at (   2,  1 cm) {};
  \node[inner sep=0pt] (nn1) at ( -1.5,  2 cm) {};
  \node[inner sep=0pt] (nn2) at (  0,  2 cm) {};
  \node[inner sep=0pt] (nn3) at (  1.5,  2 cm) {};
  \node[draw=none,fill=none] (nlabel) at (-3, 1 cm) {$cr_5^C$:};

  \draw (nn1) -- (nn2);
  \draw (nn2) -- (nn3);
  \draw (n1) -- (nn1);
  \draw (n2) -- (nn1);
  \draw (n3) -- (nn2);
  \draw (n4) -- (nn3);
  \draw (n5) -- (nn3);

  \node[inner sep=0pt] (mm1) at ( -1.5, 0 cm) {};
  \node[inner sep=0pt] (mm2) at (  0, 0 cm) {};
  \node[inner sep=0pt] (mm3) at (  1.5, 0 cm) {};

  \draw (mm1) -- (mm2);
  \draw (mm2) -- (mm3);
  \draw (n1) -- (mm1);
  \draw (n4) -- (mm1);
  \draw (n3) -- (mm2);
  \draw (n2) -- (mm3);
  \draw (n5) -- (mm3);

\end{tikzpicture}}
\caption{Family $\mc F_5$.}
\label{fig:F_5}
\end{figure}
The chain rule gives
\begin{align*}
    p(id_4; F) &= p(id_4; id_5) p(id_5; F) + p(id_4; cr_5^A)
    p(cr_5^A; F)+ \\
    &+p(id_4; cr_5^B) p(cr_5^B; F) +
    p(id_5; cr_5^C) p(cr_5^C; F)\\
    &= p(id_5; F) + \frac 35 p(cr_5^A; F) +
    \frac 15 p(cr_5^B; F).
\end{align*}

Similarly, we can expand $p(cr_4; F) = \frac 25
p(cr_5^A;F) + \frac 45 p(cr_5^B; F) + p(cr_5^C; F)$.

For the ease of notation, we can express these two identities using
the syntax of flag algebras:
\begin{align*}
  id_4 &= id_5 + \frac35 cr_5^A + \frac 15 cr_5^B \\
  cr_4 &= \frac25 cr_5^A + \frac45 cr_5^B + cr_5^C.
\end{align*}
In this syntax, the equation $\sum_{i\in I} \alpha_i F_i = 0$
means that for all sufficiently large $\sigma$-flags $F$, we have
$\sum_{i\in I} \alpha_i p_{\sigma}(F_i; F) = 0$, where $\alpha_i \in
\mathbb{R}$ for all $i\in I$. We use $\mc{A}^{\sigma}$ to
denote the set of linear combinations of flags of type $\sigma$.
It is convenient to define a \emph{product} of flags in the following way:
\[
  F_1\cdot F_2 := \sum_{F\in \mc{F}^{\sigma}_l}
  p_{\sigma}(F_1,F_2; F) F, \qquad F_1 \in \mc{F}^{\sigma},
    F_2 \in \mc{F}^{\sigma}, l \ge |F_1| + |F_2| - |\sigma|.
\]
(Note that because of the chain rule, it does not matter which $l$ we
choose.)

To further simplify the notation, we can extend the definitions of
$p_{\sigma}$ and $d_{\sigma}$ to $\mc{A}^{\sigma}$ by making them
linear in each coordinate.
The product notation simplifies these extended definitions, because
$p_{\sigma}(f_1\cdot f_2; f) = p_{\sigma}(f_1,f_2;f)$ and
$d_{\sigma}(f_1\cdot f_2; g) = d_{\sigma}(f_1, f_2; g)$, for any $f_1,f_2,f\in
\mc{A}^{\sigma}$ and for any $g\in\mc{A}^0$.

\medskip

The last piece of notation we introduce is that of the averaging
operator.  Recall that for any $\sigma$-flag $F$, we had the
normalizing factors $q_{\sigma}(F)$ such that $d_{\sigma}(F; G) =
q_{\sigma}(F) p(F|_0 ; G)$. In the syntax of flag algebra, this
averaging operation is denoted by
$[[F]]_{\sigma}:=q_{\sigma}(F)\cdot F|_0$. We extend
this linearly to all elements of $\mc{A}^{\sigma}$.
For example
\[
  [[id_4^1]]_{dot} = id_4, \quad [[cr_4^1]]_{dot} = cr_4,
  \quad [[id_4^1 + cr_4^1]]_{dot} = id_4 + cr_4, \text{ and } \quad
  [[W_1^1]]_{dot} = \frac 23 W.
\]
This notation is useful, because $d_{\sigma}(f; g)=p([[f]]_{\sigma}; g)$
for any $f\in\mc{A}^{\sigma}$ and for any $g\in\mc{A}^0$, and
hence we have a unified notation for both types of flag densities.

\subsection{Extremal problems in the flag algebra calculus}
\label{sec:extremal_prob}

Recall that our optimization problem is to maximize the density of
$cr_4$ amongst all possible tree-pairs. We will show how
flag algebras can be applied to this problem to reduce it to a
semi-definite programming (SDP) problem, which can then be
solved numerically.

We may use the chain rule to obtain $d(cr_4;D) =
\sum_{H \in \mc{F}_t} d(cr_4;H) d(H;D)$ for $t\ge 4$.
Since $\sum_{H\in \mc{F}_t} d(H;D)=1$, we have
\[
  d(cr_4;D) \le \max_{H\in \mc{F}_t} d(cr_4;H),
\]
which is a bound that clearly does not depend on $D$. For instance,
when we choose $t=6$ we already obtain $d(cr_4;D) \le \frac{14}{15}$.

Inequalities obtained 
this way are often very weak, since we only use very local
considerations about the sub-tree-pairs $H \in \mc{F}_t$, and we do
not take into account how the tree-pairs fit together in the larger
tree-pair $D$; that is, how they intersect.

One might hope to find inequalities of the form $\sum_{H \in\mc{F}_t}
\alpha_H d(H;D) \ge 0$, such that when we combine them with the initial
identity, we get
\begin{align*}
  d(cr_4;D) &\le d(cr_4;D) + \sum_{H \in \mc{F}_t} \alpha_H d(H;D) =
  \sum_{H \in \mc{F}_t} (d(cr_4;H) + \alpha_H) d(H;D) \\
  &\le \max_{H\in \mc{F}_t} \{d(cr_4;H) + \alpha_H\}.
\end{align*}

Since $\alpha_H$ can be negative for some models $H$, the hope is
that this will improve the low coefficients by transferring weight
from high coefficients. In order to find such inequalities, we need
another property of the flag densities.

\begin{proposition}
\label{prop:small_error}
If $\sigma$ is a fixed type of size $k$, $m\ge 1$ is
an integer, $\{F_i\}_{i=1}^m$ is a fixed family of $\sigma$-flags of sizes
$|F_i|=l_i$, and $l \ge k + \sum_{i=1}^m (l_i - k)$ is an integer, then for
any flag $F$ of order $n \ge l$, we have
\[
  p_{\sigma}(F_1,\ldots,F_m;F) = \left[\prod_{i=1}^m p_{\sigma}(F_i;F)\right] +
  O(1/n),
\]
where the constant in the big-O notation might depend on the family
$\{F_i\}_{i=1}^m$.
\end{proposition}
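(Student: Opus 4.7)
The plan is to compare two natural sampling procedures on the set $U \subseteq \lvs(F)$ of unlabeled leaves, which has size $N = n - k$. Write $s_i = l_i - k$ for the number of ``new'' leaves we pick to witness each $F_i$. In the \emph{joint procedure} we choose $X_1, \ldots, X_m$ to be a uniformly random ordered collection of pairwise disjoint subsets of $U$ with $|X_i| = s_i$; by definition this sampling yields $p_\sigma(F_1, \ldots, F_m; F)$ when we condition on each $A \cup X_i$ inducing the flag $F_i$. In the \emph{independent procedure} we pick each $X_i$ independently and uniformly at random from among all $s_i$-subsets of $U$, allowing overlaps. Under the independent procedure, the events $\tilde E_i = \{A \cup X_i \text{ induces } F_i\}$ are mutually independent with $\Pr[\tilde E_i] = p_\sigma(F_i; F)$, so $\Pr[\bigcap_i \tilde E_i] = \prod_i p_\sigma(F_i; F)$.

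The key observation is that the independent procedure, conditioned on the event $D$ that the $X_i$'s are pairwise disjoint, is exactly the joint procedure; therefore
\[
  \Pr\!\Bigl[\bigcap_i \tilde E_i \,\Big|\, D\Bigr] = p_\sigma(F_1, \ldots, F_m; F).
\]
Conditioning on $D$ versus its complement yields
\[
  \prod_{i=1}^m p_\sigma(F_i; F) \;=\; \Pr[D]\cdot p_\sigma(F_1,\ldots,F_m;F) \;+\; \Pr[\overline D]\cdot \Pr\!\Bigl[\bigcap_i \tilde E_i \,\Big|\, \overline D\Bigr].
\]
Since the last factor lies in $[0,1]$, it suffices to bound $\Pr[\overline D]$.

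For this, I estimate $\Pr[\overline D]$ by a union bound over pairs. For each $i \ne j$ and each fixed leaf $v \in U$, independence gives $\Pr[v \in X_i \cap X_j] = s_i s_j / N^2$, so by linearity $\E|X_i \cap X_j| = s_i s_j / N$ and by Markov's inequality $\Pr[X_i \cap X_j \ne \emptyset] \le s_i s_j / N$. Summing over the $\binom{m}{2}$ pairs,
\[
  \Pr[\overline D] \;\le\; \frac{1}{N}\sum_{i<j} s_i s_j \;=\; O(1/n),
\]
where the implicit constant depends only on $m, k, l_1, \ldots, l_m$. Substituting $\Pr[D] = 1 - O(1/n)$ into the display above gives $\prod_i p_\sigma(F_i;F) = p_\sigma(F_1,\ldots,F_m;F) + O(1/n)$, which rearranges to the claim. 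There is no real obstacle here: the only subtlety is remembering to allow the $O(1/n)$ constant to absorb the sizes $l_i$, which is already permitted by the statement.
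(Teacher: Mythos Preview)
Your argument is correct and is precisely the approach the paper sketches: compare the joint (disjoint-sampling) procedure with the independent-sampling procedure, note that the latter yields $\prod_i p_\sigma(F_i;F)$ by independence, and bound the discrepancy by the $O(1/n)$ probability that the independently chosen $X_i$'s overlap. You have simply filled in the details the paper leaves implicit.
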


One can prove \propref{prop:small_error} by noting that, if we drop the
requirement that the sets $X_i$ are disjoint in the definition of
$p_{\sigma}(F_1,\ldots,F_m;F)$, the events $E_i$ will become
independent, and thus $\Prob\lt[\cap_{i=1}^m E_i\rt] = \prod_{i=1}^m
\Prob[E_i] = \prod_{i=1}^m p_{\sigma}(F_i;F)$.  The error
introduced is the probability that these sets $X_i$ will intersect
in $F$, which is $O(1 / n)$.  It is tempting to claim a similar product
formula for the unlabeled flag densities $d_{\sigma}$, but we cannot do so.
In the above equation, it is essential that all the $\sigma$-flags $F_i$
share the same labeled type $\sigma$, and hence we require $F$ to be a
$\sigma$-flag.

\medskip

We are now ready to establish some inequalities. Let's first fix a
type $\sigma$ of size $k$. If $Q$ is any positive semi-definite
$|\mc{F}_l^{\sigma}|\times|\mc{F}_l^{\sigma}|$ matrix with
rows and columns indexed by the same set $\mc{F}_l^{\sigma}$,
where $l\ge k$, define the ``quadratic form'' on flags by
\[
  Q\{\mc{F}_l^{\sigma}\} :=
  \sum_{F_1,F_2 \in \mc{F}_l^{\sigma}} Q_{F_1,F_2}
  F_1 \cdot F_2 \in \mc{A}^{\sigma}.
\]
\propref{prop:small_error} yields, for a $\sigma$-flag $F$
of sufficiently large size, the following approximation
\begin{equation}
  \label{eqn:approx_sdp}
  p_{\sigma}(Q\{\mc{F}_l^{\sigma}\}; F)\approx\sum_{F_1,F_2 \in
  \mc{F}_l^{\sigma}} Q_{F_1,F_2} p_{\sigma}(F_1;F)
  p_{\sigma}(F_2;F).
\end{equation}
Note that because $Q$ is positive semi-definite, the right hand side
of \eqref{eqn:approx_sdp} is always non-negative. Even after averaging
we obtain:
\begin{align*}
  [[Q]]_{\sigma}(D) &:=
  p([[Q\{\mc{F}_l^{\sigma}\}]]_{\sigma};D) =
  \sum_{F_1,F_2 \in \mc{F}_l^{\sigma}} Q_{F_1,F_2}
  d_{\sigma}(F_1,F_2;D) \\
  &= \sum_{F_1,F_2 \in \mc{F}_l^{\sigma}} Q_{F_1,F_2} \left(
  \sum_{F\in \mc{F}_{t}^{\sigma}} p_{\sigma}(F_1,F_2;F)
  d_{\sigma}(F;D) \right) \\
  &= \sum_{F\in \mc{F}_{t}^{\sigma}}
  \left(\sum_{F_1,F_2 \in \mc{F}_l^{\sigma}} Q_{F_1,F_2}
  p_{\sigma}(F_1,F_2;F)\right) d_{\sigma}(F;D) \\
  &= \sum_{F\in \mc{F}_{t}^{\sigma}} \left(\sum_{F_1,F_2 \in
  \mc{F}_l^{\sigma}} Q_{F_1,F_2} p_{\sigma}(F_1;F)
  p_{\sigma}(F_2;F)\right) d_{\sigma}(F;D) + O(1/n) \ge
  o_{n\to\infty}(1),
\end{align*}
where $n$ is the size of the tree-pair $D$ and $2l - |\sigma| \le t \le n$
is some fixed integer. Therefore, when $n$ is large, we have that
$[[Q]]_{\sigma}(D)$ is asymptotically non-negative. For each
admissible model $H$ of size exactly $t$, let $\alpha_H =
[[Q]]_{\sigma}(H)= \sum_{F_1,F_2 \in \mc{F}_t^{\sigma}}
Q_{F_1,F_2} d_{\sigma}(F_1,F_2;H)$. We then have
\[
 [[Q]]_{\sigma}(D) = \sum_{H\in \mc{F}_t} \alpha_H d(H;D)
 \ge o_{n\to\infty}(1).
\]
The expression in the middle of the above equation is called the
\emph{expansion} of $[[Q]]_{\sigma}(D)$ in tree-pairs of size $t$, with
$\alpha_H$ the coefficients of the expansion. For the sake of
conciseness, we often omit the parameter $D$ and express this
asymptotic inequality (combined with the expansion in size $t$) in
the syntax of flag algebras
\begin{equation}
  \label{eqn:positivity_flags}
  [[Q]]_{\sigma} := [[Q\{\mc{F}_l^{\sigma}\}]]_{\sigma}=
  \bigg[\bigg[\sum_{F_1,F_2\in\mc{F}^{\sigma}_l} Q_{F_1,F_2}
  F_1 \cdot F_2\bigg]\bigg]_{\sigma}=\sum_{H\in \mc{F}_t} \alpha_H H \ge 0.
\end{equation}
(Note that all inequalities between flags stated in the language of
flag algebras are asymptotic.)


\medskip

In general, if we have more than one inequality available, we can
combine them together, provided they are all expanded in the same
size $t$. Suppose we have $r$ inequalities given by the positive
semi-definite matrices $Q_i$ of the $\sigma_i$-flags of size $l_i$.
Adding them together, we obtain
\[
\sum_{i=1}^r [[Q_i]]_{\sigma_i} = \sum_{H \in \mc{F}_t} \alpha_H H \ge  0,
\]
where
\[
  \alpha_H = \sum_{i=1}^r \left(\sum_{F_1,F_2 \in \mc{F}_{l_i}^{\sigma_i}}
  (Q_i)_{F_1,F_2} d_{\sigma_i}(F_1,F_2;H)\right),
\]
and we want to minimize $\max_{H \in \mc{F}_t} \left\{d(cr_4;H) +
\alpha_H \right\}$.

\medskip

Thus we have transformed the original problem of finding a minimum
upper bound for $d(cr_4;G)$ into a linear system involving the
variables $(Q_i)_{F_k,F_l}$.  As we have the constraint that the
matrices $Q_i$ should be positive semi-definite, this is a
semi-definite programming problem.  To take the maximum coefficient
in the expansion, we introduce an artificial variable $y$, and require
it to be bounded below by all the coefficients. Hence we have the
following SDP problem in the variables $y$ and $( Q_i )_{F_1,F_2}$:

\medskip

Minimize $y$, subject to the constraints:
\begin{itemize}
\item We have $s_H \ge 0$ for all $H\in \mc{F}_t$, where
\begin{equation}
\label{eqn:surplus}
s_H := y-d(cr_4;H) - \sum_{i=1}^r \lt(\sum_{F_1,F_2 \in
\mc{F}_{l_i}^{\sigma_i}} (Q_i)_{F_1,F_2} d_{\sigma_i}(F_1,F_2;H)\rt).
\end{equation}
The variables $s_H$ are called \emph{surplus} variables.
\item $Q_i$ is positive semi-definite for $i \in [r]$. (The matrices $Q_i$
are often called the \emph{block variables} of the SDP problem.
We can assume without loss of generality that each $Q_i$ is symmetric,
as otherwise we could replace $Q_i$ by $(Q_i + Q_i^T)/2$.)
\end{itemize}

\medskip

A computer can solve this SDP problem numerically, allowing for an
efficient determination of the inequalities required to prove the
extremal problem. We note at this point, that the solution to the SDP
problem need not only give the asymptotic bound, but can also provide
some structural information about the extremal tree-pair.

\section{Main result}
\label{sec:main}

In this section we discuss some practical considerations of the main
theorem. For a square matrix $A$, let $\tr(A)$ denote its trace.
The original formulation of the SDP problem can be rewritten in
concise matrix notation as follows:
\begin{equation}
  \label{eqn:sdp_prob}
  \begin{array}{rrcll}
  \text{minimize} & \tr (C \cdot Q)  \\
  \text{subject to} &  \tr(A_j \cdot Q) &=& b_j, &
  \text{for all } j=1,\ldots, m, \\
  \text{and} & Q  &\succeq& 0& \text{(i.e., $Q$ is positive semi-definite)}
  \end{array}
\end{equation}
where $m=|\mc F_t|$ represents the number of constraints in the problem,
$C$ is the cost matrix (we have $\textup{tr}(C\cdot X) = y$, where
$y$ is as in the previous subsection), $Q$ is positive semi-definite
matrix consisting of all the block-variable matrices $Q_i$, and
each equation $\textup{tr}(A_j \cdot Q) = b_j$ corresponds to one
of the equations \eqref{eqn:surplus} from the original formulation.
In particular, if $\mc F_t =\{H_1,\ldots, H_m\}$, we have $b_j =
d(cr_4;H_j)$. Finally, we let $\ell$ denote the number of rows/columns
of $Q$.

A computer usually cannot solve \eqref{eqn:sdp_prob} exactly, but
only approximately. In other words, the output of the SDP solver will be
a matrix $Q'$ that satisfies the constraints approximately. Namely, we have
\begin{equation}
\label{eqn:sdp_approx}
\begin{array}{rl}
\big|\tr (A_j\cdot Q') - b_j\big| &\le \eps, \quad\text{for }j=1,\ldots m,
  \text{ and} \\
Q' + \eps I_\ell &\succeq 0,
\end{array}
\end{equation}
for some small $\eps > 0$ (usually $\eps < 10^{-9}$), where $I_\ell$
denotes the $\ell\times \ell$ identity matrix. In what follows we describe
how to obtain a matrix $Q$ that satisfies all the constraints of
\eqref{eqn:sdp_prob} and is not ``too far'' from the approximate
solution $Q'$. That way $\tr(C\cdot Q) \approx \tr(C\cdot Q')$.

A natural first step towards this goal is to slightly change $Q'$ so that it
satisfies all the linear constraints in \eqref{eqn:sdp_prob}. For that
purpose, we will project $Q'$ to the affine subspace of all
$\ell \times \ell$ symmetric real matrices $Q$ that satisfy
$\tr(A_j\cdot Q)=b_j$ for all $j=1,\ldots, m$.
Let $Q''$ denote this projection. How much did we change the approximate
solution? Namely, how large is $||Q'- Q''||_\infty$? We recall that
for a matrix $A$, we denote $||A||_\infty := \max_{ij} |A_{ij}|$ and
$||A||_1 := \sum_{ij} |A_{ij}|$.

To estimate $||Q'- Q''||_\infty$ we often use some inequalities
from the following proposition.

\begin{proposition}
\label{prop:inequalities}
The following statements are true:
\begin{enumerate}[(i)]
\item If $A\in \mathbb{R}^{m\times n}$ and $B\in \mathbb{R}^{n \times l}$
are two real matrices, then $||A\cdot B||_{\infty} \le n \cdot ||A||_\infty
\cdot ||B||_\infty$.
\item If $A\in \mathbb{R}^{m\times n}$ and $v\in \mathbb{R}^n$, then
$||A\cdot v||_\infty \le ||A||_\infty \cdot ||v||_1$.
\item Let $A\in \mathbb{R}^{n\times n}$ be any $n\times n$ symmetric matrix.
Then $A + n\cdot ||A||_\infty \cdot I_n$ is positive semi-definite.
\end{enumerate}
\end{proposition}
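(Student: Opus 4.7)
The plan is to prove each of the three inequalities directly from the definitions, with no flag-algebraic machinery needed; these are essentially standard matrix-norm estimates used later to certify positivity of the rounded SDP solution.

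For (i), I would simply unpack the matrix product entrywise: $(A\cdot B)_{ij}=\sum_{k=1}^{n} A_{ik}B_{kj}$, apply the triangle inequality to get $|(A\cdot B)_{ij}|\le\sum_{k=1}^{n}|A_{ik}|\,|B_{kj}|$, and bound each term by $\|A\|_\infty\|B\|_\infty$. Taking the maximum over $i,j$ gives $\|A\cdot B\|_\infty\le n\|A\|_\infty\|B\|_\infty$. For (ii), the same approach works but sharper: $(A\cdot v)_i=\sum_{j=1}^n A_{ij}v_j$, so $|(Av)_i|\le\sum_j|A_{ij}|\,|v_j|\le\|A\|_\infty\sum_j|v_j|=\|A\|_\infty\|v\|_1$, and again take the max in $i$. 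Neither of these presents any real obstacle.

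For (iii), which is the only one with any content, I would show that for every $x\in\mathbb{R}^n$,
\[
 x^T A x \ge -n\,\|A\|_\infty\,x^T x,
\]
so that $x^T(A+n\|A\|_\infty I_n)x\ge 0$. Writing $x^T A x=\sum_{i,j}A_{ij}x_ix_j$, the triangle inequality together with $|A_{ij}|\le\|A\|_\infty$ gives
\[
 \bigl|x^T A x\bigr|\le\|A\|_\infty\sum_{i,j}|x_i|\,|x_j|=\|A\|_\infty\,\|x\|_1^{\,2}.
\]
Then the elementary inequality $\|x\|_1^{\,2}\le n\|x\|_2^{\,2}$ (immediate from Cauchy--Schwarz applied to $\sum_i 1\cdot|x_i|$) yields $|x^T A x|\le n\|A\|_\infty\,x^T x$, which in particular implies the lower bound above. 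An equivalent route is via Gershgorin's theorem: every eigenvalue $\lambda$ of the symmetric matrix $A$ satisfies $\lambda\ge A_{ii}-\sum_{j\ne i}|A_{ij}|\ge-n\|A\|_\infty$ for the appropriate row $i$, so the eigenvalues of $A+n\|A\|_\infty I_n$ are all nonnegative.

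There is no genuine obstacle here; the only mildly substantive point is the reduction in (iii) from the entrywise $\infty$-norm bound to a quadratic form bound, which costs the factor of $n$ via Cauchy--Schwarz (or equivalently via Gershgorin). I would present (i) and (ii) in one or two lines each and spend the bulk of the proof on the clean computation for (iii).
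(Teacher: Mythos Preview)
Your proof is correct and essentially identical to the paper's: parts (i) and (ii) are done by the same entrywise expansion, and for (iii) the paper likewise bounds $|x^T A x|\le \|A\|_\infty\|x\|_1^2$ (phrasing it as ``(ii) applied twice'') and then invokes Cauchy--Schwarz for $\|x\|_1^2\le n\|x\|_2^2$. Your Gershgorin alternative is a nice extra but is not in the paper.
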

\begin{proof}
\begin{enumerate}[(i)]
\item Let $C = A\cdot B$. We have $C_{ij} = \sum_{k=1}^n
A_{ik} B_{kj}$, thus
\[
 |C_{ij}| \le \sum_{k=1}^n
  |A_{ik}| |B_{kj}| \le \sum_{k=1}^n ||A||_\infty ||B||_\infty =
  n \cdot ||A||_\infty \cdot ||B||_\infty,
\]
hence $||C||_\infty \le n \cdot ||A||_\infty \cdot ||B||_\infty$.
\item Let $w = A\cdot v$. We have $w_i = \sum_{j=1}^n A_{ij}v_j$, thus
\[
 |w_i| \le \sum_{j=1}^n
  |A_{ij}| |v_j| \le \sum_{j=1}^n ||A||_\infty |v_j|=
  ||A||_\infty \cdot ||v||_1,
\]
therefore $||w||_\infty \le ||A||_\infty \cdot ||v||_1$.
\item Let $B=A + n\cdot ||A||_\infty \cdot I_n$. It suffices to show that
for all $v\in \mathbb{R}^n$, we have $v^T\cdot B \cdot v \ge 0$. Let
$a = v^T \cdot A \cdot v$. Using the definition of $B$, we obtain
\[
  b:=v^T\cdot B \cdot v = v^T \cdot A \cdot v + n\cdot ||A||_\infty\cdot ||v||_2^2
  = a + n\cdot ||A||_\infty\cdot ||v||_2^2.
\]
By (ii) applied twice, we infer
\[
  |a| = ||v^T \cdot A \cdot v||_\infty \le ||v^T||_1 \cdot||A\cdot v||_\infty
   \le ||A||_\infty\cdot ||v||_1^2
\]
So by Cauchy-Schwarz inequality, we obtain $|a| \le ||A||_\infty\cdot ||v||_1^2
\le n\cdot ||A||_\infty\cdot ||v||_2^2$, therefore $b \ge 0$, finishing the proof.
\end{enumerate}
\end{proof}

In what follows, we introduce further notation in order to express $Q''$
in terms of $Q'$ and the parameters of the problem \eqref{eqn:sdp_prob}.
Let $\mc S$ be the linear space of all $\ell\times \ell$
real symmetric matrices, and let $\mathbf{A}$ be the linear map
$\mathbf{A}: \mc S \to \mathbb{R}^m$ defined by $\mathbf{A}(Q)_j =
\tr(A_j \cdot Q)$. In addition, let $b\in \mathbb{R}^m$ be vector
with coordinates $b_j$ for $j=1,\ldots, m$, and let $\mc H$ be the
affine subspace of all $\ell\times \ell$ real symmetric matrices $Q$
that satisfy the linear constraints of \eqref{eqn:sdp_prob}, namely
$\tr(A_j\cdot Q)=b_j$ for $j=1,\ldots, m$.
Note that $\mc H$ is the pre-image of $b$ by $\mathbf{A}$. Let $\mathbf{P}$
be the orthogonal projection from the set $\mc S$ to the affine subspace
$\mc H$. One can compute this projection by a solution to a least squares
problem as follows:
\[
  \mathbf{P}(Q) = Q + \mathbf{A}^T \cdot (\mathbf{A} \cdot \mathbf{A}^T)^{-1}
   (b - \mathbf{A}(Q)),
\]
for all $Q\in \mc S$, where $\mathbf{A}^T: \mathbb{R}^m \to \mc S$ denotes
the transpose of $\mathbf{A}$. We have $Q'' = \mathbf{P}(Q')$, thus by
\propref{prop:inequalities}~(i), we have
\[
  ||Q''-Q'||_\infty \le m \cdot ||\mathbf{A}^T \cdot (\mathbf{A} \cdot
  \mathbf{A}^T)^{-1}||_\infty \cdot ||b - \mathbf{A}(Q')||_\infty
  \le \eps m \cdot ||\mathbf{A}^T \cdot (\mathbf{A} \cdot
  \mathbf{A}^T)^{-1}||_\infty.
\]
For all the instances of \eqref{eqn:sdp_prob} that we consider, one can verify
that $||\mathbf{A}^T \cdot (\mathbf{A} \cdot \mathbf{A}^T)^{-1}||_\infty \le 1$,
thus $\eps':=||Q''-Q'||_\infty < \eps m$. This inequality together with
\propref{prop:inequalities}~(ii) implies that
\[
  \tr(C\cdot Q'') = \tr(C\cdot Q') + \tr(C\cdot (Q''-Q'))\le
  \tr(C\cdot Q') + \eps m \cdot ||C||_1.
\]

We know that $Q''$ satisfies all the linear constraints of \eqref{eqn:sdp_prob},
but $Q''$ might not be positive semi-definite. An application of
\propref{prop:inequalities}~(iii) yields that
\[
  (Q'' - Q') + \ell\cdot ||Q''-Q'||_\infty \cdot I_\ell \succeq 0,
\]
which, together with the inequality $Q'+\eps I_\ell\succeq 0$ from
\eqref{eqn:sdp_approx}, implies that $Q'' + (\eps + \ell \eps')\cdot I_\ell
\succeq 0$. To make $Q''$ positive semi-definite, we hope to find
a matrix $\tilde Q$ that satisfies $\tr(A_j \cdot \tilde Q) = 0$
for $j=1,\ldots, m$ and such that all the eigenvalues of $\tilde Q$
are large. If such $\tilde Q$ exists then $Q'' + \delta \tilde Q$ will
be positive semi-definite for some small $\delta >0$ and will also
satisfy the linear constraints in \eqref{eqn:sdp_prob}. For this reason,
we consider the following problem:

\begin{equation}
  \label{eqn:sdp_prob2}
  \begin{array}{rrcll}
  \text{minimize} & \tr (0 \cdot \tilde Q)  \\
  \text{subject to} &  \tr(A_j \cdot \tilde Q) &=& 0, &
  \text{for all } j=1,\ldots, m, \\
  \text{and} & \tilde Q  &\succ& 0 &\text{(i.e., $\tilde Q$
  is strictly positive-definite)} \\
  \end{array}
\end{equation}

Note that the function being minimized is the constant zero function,
so \eqref{eqn:sdp_prob2} is a pure feasibility problem.
We again use computers to obtain an approximate solution $\tilde Q'$
to \eqref{eqn:sdp_prob2}. Surprisingly, it turns out that the obtained
solution $\tilde Q'$ not only satisfies $\big|\tr(A_j \cdot \tilde Q')|
< \eps$ for all $j=1,\ldots, m$, but also has a large smallest eigenvalue
(much larger than $\eps + \ell \eps'$), even though
$|\tr(C\cdot \tilde Q')|$ is relatively small. We will later exploit
these properties to adjust $Q''$ to an exact solution of
\eqref{eqn:sdp_prob}.

Using similar ideas as before, we obtain a matrix $\tilde Q''$ that satisfies
$\tr(A_j \cdot \tilde Q'')=0$ for all $j=1,\ldots, m$ by means of
orthogonal projection of $\tilde Q'$ to the appropriate subspace.
As we have already seen, this operation only slightly changes the eigenvalues
of $\tilde Q'$. Finally we let $Q:= Q'' + \delta \tilde Q''$,
where $\delta= \frac{\eps + \ell \eps'}{\lambda}$ and $\lambda$ is
the smallest eigenvalue of $\tilde Q''$ (in all of our instances we have
$\delta < 10^{-4}$). Clearly $Q$ satisfy all the constraints
of \eqref{eqn:sdp_prob}, including $Q \succeq 0$. Moreover, we have
\[
  \tr(C\cdot Q) = \tr(C\cdot Q'') + \delta\cdot \tr(C\cdot \tilde Q'')
   \le \tr(C\cdot Q')+\eps m \cdot ||C||_1+
   \delta\cdot \tr(C\cdot \tilde Q''),
\]
and since both $\eps m$ and $\delta$ are typically small, we will not
change the objective value much from the original approximate solution
$Q'$ to the exact solution $Q$. Therefore $Q$ is the desired exact
solution which is ``close'' to $Q'$.

In what follows we have a compiled table displaying the several
bounds obtained for different instances of the SDP problem. 
The first column represents the parameter $t$, which is the size
of the tree-pairs used in the expansion of the problem
(see \secref{sec:extremal_prob} for more details). The second
column counts the number of tree-pairs of size $t$. The third
column indicates how many types where used, that is, the types
$\sigma$ for the inequalities of the form \eqref{eqn:positivity_flags}.
The used types are all those having size with the same parity as
and strictly smaller than $t$. The fourth column contains the total
number of variables in the SDP instance including the surpluses.
Finally, the last column tells the bound obtained from the SDP solver.  
The program used to generate the SDP instances and verify these
calculations can be downloaded at
\url{http://http://http://arxiv.org/src/1505.04344v2/anc}.

\begin{table}[h]
\centering
\begin{tabular}{|l||l|l|l|l|}
\hline
$t$ & $m = |\mc F_t|$
& \# of types used & $\ell=$\# of variables& Bound \\
\hline
5 &  4 & 1 & 50 & 0.884766\\
6 &  31 & 3 & 697 & 0.760257\\
7 &  243 & 6 & 12050 & 0.707633\\
8 &  3532 & 35 & 506171 & 0.688397\\
\hline
\end{tabular}
\caption{Several instances of the main SDP problem.}
\label{tbl:main}
\end{table}
\textbf{Remark.} It was still possible to run the program
for $t=9$, but unfortunately the bound was still strictly
greater than $2/3$.

\section{On caterpillar trees}
\label{sec:caterpillar}

In this section, we prove \thmref{thm:caterpillar} --- a restricted
version of \conjref{conj:main} to caterpillar trees. One possible
approach to this problem is to use the same machinery of flag algebras
for the theory of tree-pairs restricted to caterpillar trees, and try
to obtain a bound in the same way as we did for \thmref{thm:main}.
However, this approach does not immediately yield the bound of $\frac 23$,
and so it is necessary (and worthwhile) to think about this problem from
a different perspective. In the next few paragraphs we will explain
how to map the problem of computing the induced density of $cr_4$
in a tree-pair of caterpillar trees into a problem of counting induced
sub-permutations of size $4$.

Suppose $D=\{\ovl{T}_1, \ovl{T}_2\}$ is a tree-pair composed by two caterpillar
trees on $n+2$ leaves (as exemplified in \figref{fig:caterpillar}).
One can think of $D$ as a permutation of $\{\alpha,x_1,\ldots,
x_n,\beta\}$ --- a permutation that tells us exactly how the leaves of
$T_1$ are ``attached'' to the leaves of $T_2$.
For instance, the tree-pair $cr_5^A$ in \figref{fig:F_5} could be
represented by the permutation $\alpha \to \alpha, x_1 \to x_1, x_2
\to x_3, x_3\to x_2, \beta \to \beta$.
In fact, multiple permutations might give rise to the same tree-pair.
Regarding this matter, our first observation is that any caterpillar
tree on four or more leaves has exactly $8$ distinct
automorphisms. To illustrate this observation, consider the caterpillar
tree on $n+2$ leaves labelled by $\alpha,x_1,\ldots, x_n,\beta$ as depicted
in \figref{fig:auto_cater}. One of the automorphisms of this tree is
$\sigma_1$, which is the unique automorphism that maps $\alpha$ to $\beta$
and $\beta$ back to $\alpha$, such as in a ``reflection''. Similarly, $\sigma_2$
is the automorphism that only swaps $\alpha$ with $x_1$ and leaves all the
remaining vertices in place. Finally, $\sigma_3$ is the automorphism that swaps
$\beta$ and $x_n$. The group of automorphisms can be then written as
$\{\sigma_1^{i_1} \sigma_2^{i_2} \sigma_3^{i_3} : 0 \le i_1, i_2, i_3 \le 1\}$.
Our second observation is that given a permutation $\pi$, and two
automorphisms $\sigma, \sigma'$ of the caterpillar tree with leaves
labelled $\{\alpha, x_1,\ldots, x_n, \beta\}$, the permutation
$\sigma \pi \sigma'$ represents the same tree-pair as $\pi$ itself.
Here we think of $\sigma$ and $\sigma'$ as only acting solely on
the leaves of the caterpillar trees.

\begin{figure}[H]
\centering
\tikzset{
  pil/.style={
    ->,
    thick,
    shorten <=2pt,
    shorten >=2pt,}
}
\begin{tikzpicture}
  [scale=1,auto=left,every node/.style={circle,draw,fill=white,inner sep=1pt}]
  \node (nleft) at (-4,  0 cm) {$\alpha$};
  \node (n1) at (-3, -1 cm) {$x_1$};
  \node[inner sep=0pt] (m1) at (-3,  0 cm) {};
  \node (n2) at (-2, -1 cm) {$x_2$};
  \node[inner sep=0pt] (m2) at (-2, 0 cm) {};
  \node (n3) at ( 3, -1 cm) {$x_n$};
  \node[inner sep=0pt] (m3) at ( 3,  0 cm) {};
  \node[draw=none,fill=none] (n4) at (0.5, -1 cm) {$\ldots$};
  \node (nright) at ( 4,  0 cm) {$\beta$};

  \draw (nleft) -- (m1);
  \draw (m1) -- (m2);
  \draw (m2) -- (m3);
  \draw (m3) -- (nright);
  \draw (m1) -- (n1);
  \draw (m2) -- (n2);
  \draw (m3) -- (n3);

  \path (nleft) edge [pil, <->, bend right]
    node[below, draw=none, fill=none] {$\sigma_2$} (n1) ;

  \path (nright) edge [pil, <->, bend left]
    node[below, draw=none, fill=none] {$\sigma_3$} (n3) ;

  \path (-3.5, .7cm) edge [pil, <->]
    node[above, draw=none, fill=none] {$\sigma_1$} (3.5, .7cm) ;

\end{tikzpicture}
\caption{The automorphisms of a caterpillar tree.}
\label{fig:auto_cater}
\end{figure}
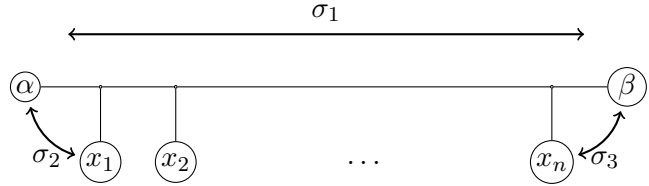

Given a permutation $\pi$ of $L:=\{\alpha, x_1,\ldots,x_n, \beta\}$, how
do we count the number of induced copies of $cr_4$ in the corresponding
tree-pair $D$ represented by $\pi$? Suppose $S\subseteq L$ is a subset
of size $4$ such that $\alpha,\beta \not\in S$ and $\alpha,
\beta\not\in\pi(S)$, say $S=\{x_{i_1}, x_{i_2}, x_{i_3}, x_{i_4}\}$
with $\pi(x_{i_t}) = x_{j_t}$ for $t=1,\ldots, 4$ and $i_1<i_2<i_3<i_4$.
Here it is helpful to think of $S$ as a subset of the leaves of
$T_1$ before the identification with the leaves of $T_2$.
The corresponding leaves selected by $S$ will induce a copy
of $id_4$ in $D$ if either $\max\{j_1, j_2\} < \min\{j_3, j_4\}$, or
$\max\{j_3, j_4\} < \min\{j_1, j_2\}$. Otherwise $S$ induces a copy
of $cr_4$. Since there are only $O(n^3)$ subsets $S$ that
do not satisfy the condition $\{\alpha,\beta\}\cap (S\cup \pi(S))
=\emptyset$, the problem of computing the density of $id_4$ in $D$
essentially becomes the problem of computing the density of the following
induced sub-permutations in a permutation $\pi \in S_n$:
\begin{equation}
  \label{eqn:4densities}
  1234, 1243, 2134, 2143, 3412, 4312, 3421, 4321.
\end{equation}

The machinery of flag algebras is very general and thus 
can also be applied
to the theory of permutations. In fact, we have the following theorem
which implies \thmref{thm:caterpillar}.

\begin{theorem}

  \label{thm:permutations}
  The sum of the densities of the permutations listed in
  \eqref{eqn:4densities} inside a permutation $\pi\in S_n$
  is at least $\frac 13 + o(1)$ for $n$ large.

\end{theorem}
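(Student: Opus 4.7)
My plan is to apply the flag algebra machinery from \secref{sec:flag_intro} to the theory of permutations, exploiting a clean combinatorial characterization of the eight patterns in \eqref{eqn:4densities}.

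\medskip

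First, fix a 4-subset of positions $i_1<i_2<i_3<i_4$ in $\pi$ and set $X_{ab}:=\mathbf{1}[\pi(i_a)<\pi(i_b)]$ for $1\le a<b\le 4$. A direct case check across the $24$ size-4 patterns shows that $(i_1,i_2,i_3,i_4)$ induces one of the eight patterns in \eqref{eqn:4densities} if and only if the four ``cross-pair'' indicators $X_{13},X_{14},X_{23},X_{24}$ all take a common value: all equal to $1$ (so $\{\pi(i_1),\pi(i_2)\}$ lies entirely below $\{\pi(i_3),\pi(i_4)\}$) or all equal to $0$ (entirely above). This reduces the theorem to a pure pattern-density problem on permutations, and mirrors the way $cr_4$ captures ``incompatible'' quartets in the tree-pair setting.

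\medskip

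Next, the calculus of \secref{sec:flag_intro} transfers verbatim to permutations: types are labeled sub-permutations, $\sigma$-flags are permutations whose labeled positions induce $\sigma$, and $p_\sigma$, $d_\sigma$, the chain rule, and the averaging operator $[[\cdot]]_\sigma$ all carry over. Let $G_4\in\mc A^0$ denote the sum of the eight flags in \eqref{eqn:4densities}. Following \secref{sec:extremal_prob}, I would fix a small expansion size $t$ (e.g.\ $t=5$ or $6$), enumerate the types of smaller size, and search for positive semidefinite matrices $Q_i$ such that every coefficient of the size-$t$ expansion of $G_4+\sum_i[[Q_i]]_{\sigma_i}$ is at least $1/3$. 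The bound is known to be sharp: under the uniform permuton, the $Y$-rank pattern of four i.i.d.\ points is a uniform permutation of $[4]$ independent of the $X$-order, giving good-pattern probability $8/24=1/3$, so the SDP should have zero duality gap at this point.

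\medskip

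The main obstacle is producing a certificate whose expansion attains the sharp value $1/3$ rather than some weaker bound. The rigidity of the uniform permuton, combined with the cross-pair identity above, suggests that a low-degree quadratic built from increasing/decreasing pair flags should suffice; the explicit matrices can be written down by hand or, failing that, obtained via the numerical SDP and exact rounding described in \secref{sec:main}.
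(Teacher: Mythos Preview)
Your approach is exactly the paper's: apply flag algebras to permutations and exhibit a sum-of-squares certificate showing $\phi\ge\tfrac13$. However, the entire substance of the paper's proof \emph{is} the explicit certificate, and your proposal stops precisely at the point where that certificate must be produced. The paper does not run an SDP and round; it writes down, by hand, four size-$2$ types $\rho_1,\ldots,\rho_4$ (the four labelings of a $2$-element pattern) and explicit linear combinations $X_i,Y_i,Z_i$ of size-$4$ $\rho_i$-flags such that
\[
\phi \;=\; \tfrac13 \;+\; \sum_{i=1}^{4}\Bigl(3\,[[(X_i-Y_i)^2]]_{\rho_i} + 6\,[[(X_i-Z_i)^2]]_{\rho_i}\Bigr),
\]
an identity verified on all $720$ permutations of size~$6$. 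So the expansion size is $t=6$, the types have size~$2$, and the vectors being squared are size-$4$ flags --- your guess ``low-degree quadratic built from increasing/decreasing pair flags'' is close in spirit (the types are pair types) but not literally what is used. Your cross-pair characterization of the eight patterns is correct and matches the paper's motivating discussion, but it plays no role in the actual argument; what is missing from your proposal is the certificate itself.
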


\begin{proof}
Using the notation from flag algebras, let $\phi$ denote the sum of
the densities of the permutations in \eqref{eqn:4densities}, that is,
\[
  \phi = 1234+1243+2134+2143+3412+4312+3421+4321.
\]
In this notation, a flag is just a permutation with some entries labeled
by the set $[k]$ for some $k\ge 0$.
For instance $1\circled 6_2 42\circled 3_15$ denotes a flag whose underlying
permutation is $164235$ for which the fifth entry is labeled $1$ and
the second entry is labeled $2$. In this case, the type of the flag is
$\circled 2_2 \circled 1_1$, since the sub-permutation induced by the labeled
entries is isomorphic to $21$ (corresponding to the entries $63$).
Another example is the flag $\circled 5_3 17\circled 2_13\circled 6_24$ ---
its underlying permutation is $5172364$ and its type is $\circled 2_3
\circled 1_1\circled 3_2$. With this definitions in mind, we remark that
a type is just a permutation of $[k]$ with all entries labeled by elements
of the set $[k]$. Thus, for an integer $k\ge 0$, types on $k$ entries are
in one-to-one correspondence with pairs of permutations of $[k]$.

Consider the following $4$ types of size $2$
\[
  \begin{array}{lclclcl}
    \rho_1 &=& \circled 1_1\circled 2_2, &\quad&
    \rho_2 &=& \circled 2_1\circled 1_2,\\
    \rho_3 &=& \circled 1_2\circled 2_1, &\quad&
    \rho_4 &=& \circled 2_2\circled 1_1.\\
  \end{array}
\]

We have
\begin{equation}
  \label{eqn:proof_perm}
  \phi = \frac 13 + \sum_{i=1}^4 3\cdot[[(X_i-Y_i)^2]]_{\rho_i}
  + 6\cdot [[(X_i-Z_i)^2]]_{\rho_i}
\end{equation}
where
\begin{align*}
 X_1 &= -1\circled 2_1\circled 3_24 -1\circled 2_1\circled 4_23
     +4\circled 2_1\circled 3_21+3\circled 2_1\circled 4_21, \\
 Y_1 &= +1\circled 2_13\circled 4_2 + 1\circled 2_14\circled 3_2
    - 4\circled 2_11\circled 3_2 - 3\circled 2_11\circled 4_2, \\
 Z_1 &= +\circled 2_11\circled 3_24 + \circled 2_11\circled 4_23
    - \circled 2_1 4\circled 3_21 - \circled 2_1 3 \circled 4_2 1,\\
 X_2 &= -4\circled 3_1\circled 2_21 -4\circled 3_1\circled 1_22
     +1\circled 3_1\circled 2_24+2\circled 3_1\circled 1_24, \\
 Y_2 &= +4\circled 3_12\circled 1_2 + 4\circled 3_11\circled 2_2
    - 1\circled 3_14\circled 2_2 - 2\circled 3_14\circled 1_2, \\
 Z_2 &= +\circled 3_14\circled 2_21 + \circled 3_14\circled 1_22
    - \circled 3_1 1\circled 2_24 - \circled 3_1 2 \circled 1_2 4,\\
 X_3 &= -1\circled 2_2\circled 3_14 -2\circled 1_2 \circled 3_14
     +4\circled 2_2\circled 3_11 + 4\circled 1_2\circled 3_12, \\
 Y_3 &= +\circled 1_22\circled 3_14 + \circled 2_2 1\circled 3_14
    - \circled 2_2 4\circled 3_11 - \circled 1_2 4\circled 3_12, \\
 Z_3 &= +1\circled 2_2 4\circled 3_1 + 2\circled 1_2 4\circled 3_1
    - 4\circled 2_2 1\circled 3_1 -  4\circled 1_2 2\circled 3_1,\\
 X_4 &= -4\circled 3_2\circled 2_11 -3\circled 4_2 \circled 2_11
     +1\circled 3_2\circled 2_14 + 1\circled 4_2\circled 2_13, \\
 Y_4 &= +\circled 4_23\circled 2_11 + \circled 3_2 4\circled 2_11
    - \circled 3_2 1\circled 2_14 - \circled 4_2 1\circled 2_13, \\
 Z_4 &= +4\circled 3_2 1\circled 2_1 + 3\circled 4_2 1\circled 2_1
    - 1\circled 3_2 4\circled 2_1 -  1\circled 4_2 3\circled 2_1,
\end{align*}
therefore $\phi \ge \frac 13$, thereby proving the theorem.
Note that in order to attest the correctness of \eqref{eqn:proof_perm},
it suffices to evaluate the left- and the right-hand side of the equation
for all permutations of size 6. 
\end{proof}

\section{Concluding remarks}
\label{sec:conclusion}

In \thmref{thm:main} we showed that the maximum quartet distance
between two arbitrary phylogenetic trees on $n$ leaves is at most
$(0.69+o(1)) \binom{n}{4}$. It would be interesting to know if 
the techniques of this paper can be pushed even further to
obtain the $(\frac 23+o(1))\binom{n}{4}$ thereby establishing
\conjref{conj:main}. 

Another approach to \conjref{conj:main} is to solve an extremal
problem in the theory of $4$-uniform hypergraphs. In \cite{AlonSnirYuster},
Alon \emph{et al} proved the asymptotic upper bound of
$\frac{9}{10}\binom{n}{4}$
by mapping a tree-pair into a $4$-uniform hypergraph in the following way.
The vertices of the hypergraph are the leaves of the tree-pair and a
subset $S$ of $4$ leaves is an edge of the hypergraph if the sub-tree-pair
induced by $S$ is isomorphic to $cr_4$. They showed that the resulting
hypergraph $\mc H$ does not contain a copy of $K_6^4$ --- the complete
$4$-uniform hypergraph on $6$ vertices. One remark is that not only $K_6^4$
but also several other forbidden hypergraphs do not appear as induced
subgraphs of $\mc H$. A natural question emerges: can one characterize
this family of forbidden subgraphs? 
In particular, is it finite? 
\vspace{0.1cm}

\noindent
{\bf Acknowledgement}
We thank Sagi Snir and Raphy Yuster for helpful discussions.

\end{document}